\newtheorem{theorem}{Theorem}[section]
\newtheorem{lemma}[theorem]{Lemma}
\newtheorem{corollary}[theorem]{Corollary}
\newtheorem{reduction}{Reduction}
\newtheorem{claim}[theorem]{Claim}
\newtheorem*{rep@theorem}{\rep@title}
\newcommand{\newreptheorem}[2]{%
\newenvironment{rep#1}[1]{%
 \def\rep@title{#2 \ref{##1}}%
 \begin{rep@theorem}}%
 {\end{rep@theorem}}}
\colorlet{darkgreen}{green!50!black}
\newcommand{\defcal}[1]{\expandafter\newcommand\csname c#1\endcsname{{\mathcal{#1}}}}
\newcounter{ct}
    \edef\letter{\Alph{ct}}
\DeclareMathOperator{\spn}{span}
\DeclareMathOperator{\even}{even}
\DeclareMathOperator{\odd}{odd}
\renewcommand{\vec}[1]{{\bm{#1}}}
\newcommand{\ie}{i.e.,\xspace}
\newcommand{\eg}{e.g.,\xspace}
\newcommand{\E}{{\mathbb{E}}}
\newcommand{\msp}{{\ensuremath{\mathsf{MSP}}}}
\newcommand{\sbmsp}{{\ensuremath{\mathsf{SB}$-$\mathsf{MSP}}}}
\newcommand{\OPT}{{\ensuremath{\mathrm{OPT}}}\xspace}
\newcommand{\var}{{\ensuremath{\mathsf{Var}}}}
\newcommand\rank{r}
\newcommand{\hide}[1]{}
\title{\textbf{A Simple 
$O(\log\log(\mathrm{rank}))$-Competitive
Algorithm for the Matroid Secretary Problem}}
\author{Moran Feldman\thanks{School of Computer and Communication Sciences, EPFL. 
Email:
\href{mailto:moran.feldman@epfl.ch}{moran.feldman@epfl.ch}. Supported by ERC Starting Grant 335288-OptApprox.},
Ola Svensson\thanks{School of Computer and Communication Sciences, EPFL.
Email:
\href{mailto:ola.svensson@epfl.ch}{ola.svensson@epfl.ch}. Supported by ERC Starting Grant 335288-OptApprox.},
Rico Zenklusen\thanks{Department of Mathematics, ETH Zurich
and Department of Applied Mathematics and Statistics,
Johns Hopkins University.
Email:
\href{mailto:ricoz@math.ethz.ch}{ricoz@math.ethz.ch}.
Partially supported by NSF grant CCF-1115849.}}
\begin{document}

\maketitle
\thispagestyle{empty}

\begin{abstract}

%A famous conjecture states that
%existence of an $O(1)$-competitive
%algorithm for the matroid secretary problem.
%However, so far, even improving on
%$O(\log(\mathrm{rank}))$-competitive algorithms
%has been proven to be surprisingly difficult,
%with some first progress achieved only very recently.
%
Only recently progress has been made
in obtaining $o(\log(\mathrm{rank}))$-competitive
algorithms for the matroid secretary problem.
More precisely, Chakraborty and Lachish (2012) presented
a $O(\sqrt{\log(\mathrm{rank})})$-competitive procedure,
and Lachish (2014) later presented a
$O(\log\log(\mathrm{rank}))$-competitive algorithm.
Both these algorithms and their analyses
are very involved, which is also
reflected in the extremely high constants
in their competitive ratios.

Using different tools, we present 
a considerably simpler $O(\log\log(\mathrm{rank}))$-competitive
algorithm for the matroid secretary problem.
Our algorithm can be interpreted as a distribution over a
simple type of matroid secretary algorithms which are
easy to analyze.
Due to the simplicity of our procedure, we are also able
to vastly improve on the hidden constant in the competitive
ratio.

%This improves on the
%previously best algorithm for the matroid
%secretary problem in 
%both the competitive ratio, in terms of the constant,
%and its simplicity.
%
%Furthermore, our procedure is order-oblivious,
%which implies that it leads to
%an $O(\log\log(\mathrm{rank}))$-competitive
%algorithm for single-sample prophet inequalities.

\end{abstract}

\medskip
\noindent
{\small \textbf{Keywords:}
matroids, online algorithms, secretary problem
}

\newpage

\setcounter{page}{1}

\section{Introduction}

The secretary problem is a classical online 
selection problem, whose origins remain
unclear~\cite{dynkin_1963_optimum,%
ferguson_1989_who,%
gardner_1960_mathematical,gardner_1960b_mathematical,%
lindley_1961_dynamic}.
In its original form, the task is to select the
best out of a set $N$ of $n$ secretaries (also called
\emph{elements} or \emph{items}). 
Secretaries appear one by one in a uniformly random order.
Whenever a secretary appears, he can be compared against
all previously appeared secretaries. Then, the algorithm
has to decide, before the arrival of the next secretary, 
whether to select the current secretary or not.
A well-known classical algorithm~\cite{dynkin_1963_optimum} 
selects the best secretary
with probability at least $1/e$, and this is known
to be asymptotically optimal.

Recently, there has been an increased interest in
variations of the secretary problem. Such variants have numerous applications in mechanism design for settings involving the selling of a good to agents arriving online.
In this context, the secretaries correspond to the
agents, and their values are the prices they are
willing to pay for the available goods
(see~\cite{azar_2014_prophet,%
babaioff_2008_online,%
babaioff_2007_matroids,%
kleinberg_2005_multiple-choice} and the references therein).
These applications naturally lead to generalized
secretary problems, where more than one element
can be selected.
In such problems, one typically assumes that each
element $e\in N$ reveals a positive weight $w(e)$, and the
goal is to select a maximum weight set of elements
subject to some constraints.
Most of these problems preserve the uniformly random arrival order of the elements, but allow adversarial assignment of weights. Like in the original problem, whenever an element appears, the algorithm must decide immediately, and irrevocably, whether to select it.
%Still, the order in which elements appear is assumed
%to be uniformly at random, whereas no assumption besides
%non-negativity is done on the weights.
%Hence, the weights are chosen adversarially.
%Whenever an element appears, one has to decide immediately
%and irrevocably whether to select this element.

The arguably most canonical generalization
of the secretary problem was introduced by
Kleinberg~\cite{kleinberg_2005_multiple-choice},
who considered the problem of
selecting $k$ out of $n=|N|$ secretaries.
%
%
%Kleinberg presented a $O(1-\sqrt{k})$-competitive
%algorithm, i.e., this algorithm
%%
%returns a (random) set of $k$ items whose expected
%weight is at least $O((1-\sqrt{k})\OPT)$, where
%\OPT is the weight of the offline optimum which is
%the weight of the $k$ heaviest elements.
%
However, many applications require more general
constraints, and thus, interest arose in finding
relevant and general constraint classes for which
strong online algorithms exist.
This led to the introduction of the matroid
secretary problem~\cite{babaioff_2007_matroids},
where the underlying constraint
set is assumed to be a matroid $M=(N,\mathcal{I})$
defined over the set $N$ of all $n$
items.\footnote{
A matroid $M=(N,\mathcal{I})$ is a tuple consisting
of a finite ground set $N$, and a nonempty
family $\mathcal{I}\subseteq 2^N$ of subsets of
the ground set, called \emph{independent}
sets, which satisfy:
(i) $I\subseteq J \in \mathcal{I} \Rightarrow I\in \mathcal{I}$,
and
(ii) $I,J\in \mathcal{I}, |I|>|J|$ $\Rightarrow$
$\exists e\in I\setminus J$ s.t.
$J\cup\{e\}\in \mathcal{I}$.
} 
Matroid constraints model many interesting settings,
and it was conjectured that there exists an algorithm
which is $O(1)$-competitive for any matroid
constraint~\cite{babaioff_2007_matroids}.
We recall that an algorithm is $c$-competitive
for some $c\geq 1$ if it returns 
an independent set $I\in \mathcal{I}$ whose expected
weight is at least $\frac{1}{c}w(\OPT)$, where $w(\OPT)$
is the weight of the offline optimum \OPT, i.e.,
the maximum weight independent set.\footnote{For
simplicity, we assume all weights are disjoint,
which implies the existence of a unique maximum weight
independent set (which is also a base of $M$, \ie its size is 
equal to the rank of $M$). This assumption is without loss of
generality since one can break ties between weights
arbitrarily.}
Motivated by the above conjecture, $O(1)$-competitive
algorithms have been obtained for
a wide variety of special classes of
matroids including graphic
matroids~\cite{babaioff_2007_matroids,korula_2009_algorithms}%
%(currently best competitive ratio: $2e$)
,
transversal matroids~\cite{babaioff_2007_matroids,%
dimitrov_2008_competitive,korula_2009_algorithms}%
% ($8$-competitive)
,
co-graphic matroids~\cite{soto_2013_matroid}%
%($3e$-competitive)
,
linear matroids with at most $k$ non-zero entries
per column~\cite{soto_2013_matroid}%
% ($k e$-competitive)
,
laminar matroids~\cite{im_2011_secretary,%
jaillet_2013_advances,%
ma_2013_simulated}%
% ($9.6$-competitive)
,
regular matroids%
% ($9e$-competitive)%
~\cite{dinitz_2013_matroid}, and
some types of decomposable matroids, including
max-flow min-cut matroids~\cite{dinitz_2013_matroid}%
%($9e$-competitive)
.
However, progress on the general case has been
much slower. 
Since the introduction of the matroid secretary problem,
a simple $O(\log \rho)$-competitive algorithm was
known~\cite{babaioff_2007_matroids}, where $\rho$ is the
rank of the underlying matroid, i.e., the cardinality
of a maximum size independent set.
Improving on this bound has shown to be surprisingly
difficult.
So far, the only improvements on this bound are an
$O(\sqrt{\log \rho})$-competitive algorithm
by Chakraborty and Lachish~\cite{chakraborty_2012_improved},
and a very recent $O(\log\log(\rho))$-competitive procedure
by Lachish~\cite{lachish_2014_competitive-ratio_focs}.
Both algorithms use a careful bucketing of the ground
set and their competitive ratios are derived through
very involved analyses.
The complexity of the analyses of the above algorithms
is also reflected in the
hidden constant of the competitive ratio which is at
least $2^{64}$ for the $O(\sqrt{\log \rho})$-competitive
algorithm suggested in~\cite{chakraborty_2012_improved},
and at least $2^{2^{34}}$ for the 
$O(\log\log(\rho))$-competitive procedure
by Lachish~\cite{lachish_2014_competitive-ratio}.
%
%% Maybe: add forward pointer of the following type,
%% either within the text or as a footnote:
%
%In Section~\ref{subsec:furtherRes} we will briefly
%mention a recently claimed $O(\log\log\rho)$-competitive
%procedure.

In this paper we present a 
much simpler $O(\log\log \rho)$
procedure for the matroid secretary problem, which
also vastly improves the hidden constant
of the competitive ratio.
Our algorithm is 
\emph{order-oblivious}, which implies that it extends to
single-sample prophet inequalities as introduced by
Azar, Kleinberg and Weinberg~\cite{azar_2014_prophet}.
We expand on this connection below in
Section~\ref{subsec:orderObl}.

We would like to highlight that
like some previous matroid secretary algorithms,
all the information our algorithm needs to know upfront is the
size $n$ of the matroid.
During its execution, the algorithm only checks the independence of subsets of elements revealed so far.
%
%In terms of the classification
%used in~\cite{oveisgharan_2013_variants},
%which gives an excellent overview over different
%variations of the matroid secretary problem,
%we consider the RO-AA-MN model, i.e.,
%\emph{random order} of element arrival,
%\emph{adversarial assignment}, and MN stands
%for the fact that only the size $n$ of the
%matroid is known.

\subsection{Single-sample prophet inequalities
and order-obliviousness}
\label{subsec:orderObl}

Prophet inequalities are a class of problems that is
closely related to secretary problems and has interesting applications in
mechanism design.
Unlike in secretary problems, in prophet inequality problems the weight of each element $e \in N$ is drawn from an element specific distribution $\mathcal{D}_e$ (the amount of knowledge the algorithm has on $\mathcal{D}_e$ varies according to the specific variant at hand). However, the order in which elements arrive is adversarial (rather than random), and might depend on the realization of the weights.
Azar, Kleinberg and Weinberg~\cite{azar_2014_prophet}
showed that interesting results can often be
obtained even if one only knows a single sample
from each distribution $\mathcal{D}_e$, which
is a setting they call \emph{single-sample
prophet inequalities}.

More precisely, they showed that any $c$-competitive
algorithm for the secretary problem can be transformed
into a $c$-competitive algorithm for single-sample
prophet inequalities, if the secretary algorithm
is \emph{order-oblivious}.
An order-oblivious procedure is one that consists of two phases: in the first phase the algorithm
specifies a (possibly random) number $m$ of the elements, and then observes a uniformly random
subset of $m$ elements without selecting any of them. The rest of the elements arrive in the second
phase, and the algorithm can select them. However, the competitive ratio of the algorithm must hold
for any order in which the elements of the second phase arrive. In other words, the elements
of the second phase might arrive in an adversarial order.
%
%Thus, an order-oblivious algorithm only needs the
%random arrival order assumption for the elements
%observed in the first phase. In the second phase,
%elements can appear adversarially without impacting
%the competitive ratio.
Thus, an order-oblivious algorithm uses only a small amount of the randomness of the arrival order, namely, whether each element appears among the first $m$ elements or not.

%Many matroid secretary algorithms for special classes
%of matroids are order-oblivious
%(see \cite{azar_2014_prophet}). However, for the general
%case only ...

% - A recent further motivation for the study of matroid secretary
%   algorithms stems from an interesting connection between
%   secretary problems and single-sample prophet inequalities.

\subsection{Our results}

Our main result is the following.

%
%where, to simplify notation,
%we assume that all logarithms we use when stating
%competitive ratios always return a value of at
%least one. In particular, $\log_2(x)=1$ for
%$x\in [0,2]$. This avoids problems of the type
%that $\log_2 \log_2 \rho$ is undefined for $\rho=1$.

\begin{theorem}\label{thm:mainThm}
There exists an order-oblivious
$2560[\log_2 \log_2 (4\rho) + 5]$-competitive
algorithm for the matroid secretary problem,
which only needs to know
the cardinality of the matroid upfront.
\end{theorem}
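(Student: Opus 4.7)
The plan is to construct an order-oblivious family of ``simple'' matroid secretary algorithms $\{\mathcal{A}_0, \mathcal{A}_1, \dots, \mathcal{A}_K\}$ with $K = O(\log\log\rho)$, and to let the main algorithm draw $k$ uniformly from $\{0, 1, \dots, K\}$ and run $\mathcal{A}_k$. The structural claim will be that, on every instance, at least one $\mathcal{A}_k$ collects an $\Omega(1)$ fraction of $w(\OPT)$ in expectation; uniform randomization over the $K+1$ choices then inflates the ratio by a factor $K+1 = O(\log\log\rho)$, which is precisely where the $\log_2 \log_2(4\rho) + 5$ factor in the statement comes from.

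Each $\mathcal{A}_k$ should fit the canonical two-phase order-oblivious template. In phase one, the algorithm fixes a (random) sample size $m$, observes the uniformly random subset of $m$ elements without selecting anything, and uses the sample to pin down a single weight threshold $\tau_k$ (possibly together with a geometric bucketing of the weight range). In phase two, the algorithm commits greedily: any arriving element with weight exceeding $\tau_k$ is accepted if and only if it is independent of the currently chosen set. Since phase one depends only on which elements are sampled and not on their order, and phase two is a matroid-greedy rule indifferent to the arrival order of the remaining elements, the whole algorithm is order-oblivious by construction, which by~\cite{azar_2014_prophet} simultaneously yields the corresponding single-sample prophet inequality.

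For the analysis, normalise so that $w(\OPT) = 1$ and partition $\OPT$ into weight classes $C_t = \{e \in \OPT : w(e) \in [2^{-t-1}, 2^{-t})\}$. Only $O(\log\rho)$ such classes carry non-negligible mass: once $2^{-t}$ drops below $1/(c\rho)$ the total contribution of the remaining classes is $O(1/c)$, which can be absorbed into the competitive constant. The role of $k$ is to choose a granularity at which weights are viewed as ``equal'': $\mathcal{A}_k$ will be tuned so that, when restricted to any collection of elements whose weights all lie within a factor $2^{2^k}$ of each other, it essentially solves a uniform-weight matroid secretary problem, losing at most the weight-spread factor inside that block. Grouping the classes $C_t$ into super-buckets of $2^k$ consecutive indices leaves $O(\log\rho / 2^k)$ super-buckets, so by averaging at least one super-bucket carries an $\Omega(2^k / \log\rho)$ fraction of $w(\OPT)$. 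Taking $k^\star$ with $2^{k^\star} \asymp \log\rho$ makes the number of super-buckets constant, and then — provided the per-super-bucket guarantee of $\mathcal{A}_{k^\star}$ does not lose the full $2^{2^{k^\star}}$ blow-up — we obtain the desired constant-competitive $\mathcal{A}_{k^\star}$.

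The main obstacle is precisely this per-algorithm guarantee: one must show that $\mathcal{A}_k$ restricted to the heavy super-bucket achieves weight $\Omega(1)$ times that super-bucket's contribution to $\OPT$, without paying the exponential weight-spread factor. The natural tool is a matroid exchange argument applied to the random sample-versus-eligible split of $\OPT$: the sampled half of the heavy super-bucket calibrates $\tau_k$ at an appropriate quantile, and the unsampled half can be exchanged against whatever the greedy rule accepts to certify that the accepted set has comparable weight. Making this exchange quantitative in an arbitrary matroid, while respecting the bucketing and the random sample, is where the proof's technical weight sits. Once this per-$\mathcal{A}_k$ bound is established, the final ratio follows by assembling the constant-factor losses from the sample/eligible split, the per-bucket greedy rule, the pigeonhole fraction over super-buckets, and the uniform randomization over $k$, producing the explicit constant $2560$ and additive offset $5$ in Theorem~\ref{thm:mainThm}.
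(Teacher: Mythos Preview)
Your high-level framework --- randomize over $O(\log\log\rho)$ ``granularity'' parameters and feed each into a simple order-oblivious subroutine --- matches the paper's outer shell. But two central ingredients of the paper are absent, and without them the sketch does not go through.

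First, your structural claim (``on every instance, at least one $\mathcal{A}_k$ collects an $\Omega(1)$ fraction of $w(\OPT)$'') is not what the paper proves, and it is unclear it is even true for the kind of single-threshold greedy $\mathcal{A}_k$ you describe. The paper's analysis is element-wise and \emph{complementary across the values of $\tau$}: it introduces the span-probabilities $p_{e,i}=\Pr[e\in\spn(S\cap C_{\geq i})\mid e\notin S]$, shows that $\tau=0$ contributes $\Omega(p_{e,i})$ to the selection probability of $e\in C_i\cap\OPT$ while $\tau\geq 1$ contributes $\Omega((1-p_{e,i})/\log\log\rho)$, and sums. No single $\tau$ is claimed to be constant-competitive; the per-element contributions add up only because the two regimes cover the two halves of $1=p_{e,i}+(1-p_{e,i})$. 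Your pigeonhole argument tries to push all the work onto $k^\star$ with $2^{k^\star}\asymp\log\rho$, but at that scale a super-bucket has weight spread $2^{2^{k^\star}}=\mathrm{poly}(\rho)$, so ``essentially uniform-weight'' gives nothing. You flag this yourself as the ``main obstacle,'' but the proposed resolution (``a matroid exchange argument applied to the sample/eligible split'') is not a mechanism; it is a placeholder for the entire difficulty.

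Second, the subroutine cannot be a single-threshold greedy. The paper's bucketing-based algorithm defines, from the sample $S$, a family of matroids $M_i=(M/(S\cap B_{\geq i+1}))|_{B_i\cap\spn(S\cap B_{\geq i-1})}$, picks odd- or even-indexed ones at random, and runs greedy \emph{separately inside each $M_i$}. The contractions are what protect heavy elements from being spanned by lighter accepted ones, and the restriction to $\spn(S\cap B_{\geq i-1})$ is exactly what makes the $p_{e,f(B_{i-1})}-p_{e,f(B_i)}$ telescoping in Lemma~\ref{le:meta_algorithm_performance} work. A plain ``accept if $w(e)>\tau_k$ and independent of current set'' has no such protection, and your exchange sketch gives no way to control the weight lost when light elements crowd out heavy ones across a $\mathrm{poly}(\rho)$ weight spread.
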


Our algorithm considerably improves on
the previous 
$o(\log \rho)$-competitive algorithms in terms
of simplicity, and we believe that a key contribution
of our work lies in the employed techniques and, arguably,
concise analysis.
These also lead
to a vastly reduced hidden constant
in the competitive ratio.
We recall that the hidden constant of Lachish's
$O(\log\log(\rho))$-competitive algorithm is
at least $2^{2^{34}}$.

The order-obliviousness of our algorithm
allows us to leverage the recent
reduction by Azar, Kleinberg
and Weinberg~\cite{azar_2014_prophet} to transform
our procedure into an algorithm for single-sample
prophet inequalities on matroids, leading to the following.
\begin{corollary}
There exists a
$2560[\log_2 \log_2 (4\rho) + 5]$-competitive
single-sample prophet inequality for any matroid.  
\end{corollary}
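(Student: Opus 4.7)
The plan is to invoke, as a black box, the reduction of Azar, Kleinberg, and Weinberg~\cite{azar_2014_prophet} recalled in Section~\ref{subsec:orderObl}. That reduction turns any order-oblivious $c$-competitive matroid secretary algorithm into a $c$-competitive single-sample prophet inequality on the same matroid, preserving the competitive ratio exactly. Since Theorem~\ref{thm:mainThm} provides an order-oblivious algorithm with ratio $c = 2560[\log_2\log_2(4\rho)+5]$, the corollary follows immediately with the identical constant.

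For intuition, I would briefly unpack how the reduction operates: given one sample $\tilde w(e)\sim\mathcal{D}_e$ per element and the realizations $w(e)\sim\mathcal{D}_e$ arriving in an adversarial order, one flips an independent fair coin per element to decide which of $\tilde w(e)$ and $w(e)$ plays the role of the first-phase observation presented to the order-oblivious algorithm; the remaining one plays the role of the second-phase arrival, on which the algorithm may commit. Because $\tilde w(e)$ and $w(e)$ are exchangeable, each element lands in the observation set independently with probability $1/2$, which is precisely the distribution the order-oblivious model is designed to tolerate, and the adversarial arrival order of the prophet model matches the adversarial second-phase order permitted by order-obliviousness.

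The only things to verify are that the algorithm from Theorem~\ref{thm:mainThm} meets the two preconditions of the reduction: it is order-oblivious and it requires no advance information beyond the cardinality $n$ of the ground set. Both are stated explicitly in Theorem~\ref{thm:mainThm}. I therefore do not anticipate any genuine obstacle; the corollary is essentially a one-line consequence of the theorem together with the cited reduction.
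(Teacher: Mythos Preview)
Your proposal is correct and matches the paper's approach exactly: the paper does not give a separate proof of the corollary but presents it as an immediate consequence of Theorem~\ref{thm:mainThm} together with the Azar--Kleinberg--Weinberg reduction, just as you do. Your added intuition about how the reduction works is extra detail not in the paper, but the core argument is identical.
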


%Finally, we mention that since our algorithm only
%needs to know the size of the matroid upfront,
%we can apply a reduction by Oveis Gharan and
%Vondr\'ak~\cite{oveisgharan_2013_variants} to obtain
%the following:
%for any $\epsilon>0$, our algorithm can be
%transformed into an
%$O(\frac{1}{\epsilon}\log^{1+\epsilon}n \log\log\rho)$-competitive
%procedure for the matroid secretary problem where not
%even the size of the matroid is known in advance. 

%Finally, we want to emphasize that we did not try to
%optimize the constants of the obtained competitive
%ratios, and often use very generous bounds,
%with the goal of simplifying the analysis
%and description of our algorithm.

\subsection{Further related results}\label{subsec:furtherRes}

Some progress has been made in obtaining
$O(1)$-competitive algorithms for restricted variants
of the matroid secretary problem. In particular,
if a set of $n$ weights is assigned uniformly at random to the elements of the ground set, then
a $5.7187$-competitive algorithm can be obtained for
any matroid~\cite{soto_2011_matroid,soto_2013_matroid}.
%
%This result was later strengthened to show that
%one can obtain
Additionally, a $16(1-1/e)$-competitive algorithm
can be obtained
even with adversarial arrival order of the elements
as long as the weight assignment is still done at
random~\cite{oveisgharan_2013_variants,soto_2013_matroid}.
Furthermore, a $4$-competitive algorithm can
be obtained in the so-called \emph{free order model},
which assumes adversarial weight assignment, but allows
the algorithm to choose the order in which the elements
appear~\cite{azar_2014_prophet,jaillet_2013_advances}.

Variants of the matroid secretary problem involving nonlinear
objective functions, including 
submodular and convex objectives, were also considered~\cite{siddharth_2012_secretary,bateni_2013_submodular,
feldman_2011_improved, gupta_2010_constrained,
ma_2013_simulated}.

\subsection{A rough outline of our approach}

Our approach involves roughly three steps. The first step is a basic reduction that allows us to
assume a known upper bound $\tilde{\rho}$
on the rank $\rho$ of the matroid and that all weights are within a range $(W/(8\tilde{\rho}),W]$ for some known value $W$.
The second step is a simple secretary algorithm,
which we call the \emph{bucketing-based algorithm}. This algorithm gets
a partition (bucketing) of the secretaries, and produces a feasible solution whose quality depends on the input bucketing.
Our final algorithm simply picks a bucketing from an appropriately
chosen distribution, and then feeds it into the bucketing-based algorithm.
The three steps roughly correspond, in that order,
to Sections~\ref{sec:preliminaries}, \ref{sec:algo} and \ref{sec:full_algo}.

Our bucketing-based algorithm, which we introduce
formally in Section~\ref{sec:algo}, uses the bucketing it receives to define two collections of matroids $M_1, M_3, \dots, M_{2k - 1}$ and $M_2, M_4 \ldots M_{2k}$
with disjoint ground sets $N_1, N_2, \ldots, N_{2k} \subseteq N$,
having the following property:
if $I_i\subseteq N_i$ is independent in $M_i$ for $i \in \{1, 2, \ldots, 2k\}$, %\footnote{We
                                %remind the reader that $[k]:=\{1,\dots, k\}$.}, 
then $\bigcup_{i=1}^k I_{2i - 1} \in \mathcal{I}$ and $\bigcup_{i=1}^k I_{2i} \in \mathcal{I}$.
The bucketing-based algorithm chooses one of these collections at random, and then
simply selects greedily an independent set for each matroid $M_i$ in the chosen collection. The output of the algorithm is the union of the selected sets.
Similar decomposition ideas have been used previously
in the context of the matroid secretary problem
(see, e.g.,~\cite{jaillet_2013_advances,soto_2013_matroid}).
The main challenge lies in finding an elegant way to
analyze the behavior of such algorithms as a function of the decomposition, and then leveraging this analysis to find an appropriate probability distribution over the possible decompositions.
We achieve this goal for the bucketing-based
algorithm by presenting
lower bounds on the probability that elements get selected
by the algorithm.

\section{Preliminaries and basic reductions} \label{sec:preliminaries}

Let us formally state the Matroid Secretary Problem (\msp). An instance of {\msp} consists of a
matroid $M = (N, \cI)$ and a positive weight function $w : N \rightarrow \mathbb{R}_{>0}$. The
objective of an algorithm for {\msp} is to select a maximum weight independent set of
$M$. Initially, the algorithm knows the size of the ground set $n = |N|$, but has no other
information about either $M$ or $w$. Then, the elements of $N$ are revealed to the algorithm in a
uniformly random order. Each time an element $e \in N$ is revealed, the algorithm learns its weight
$w(e)$ and must decide immediately, and irrevocably, whether to select it. The algorithm also has
access to an independence oracle that,
given a subset $T \subseteq N$ of elements that \emph{already}
arrived, answers whether $T \in \cI$.

To simplify the exposition of our algorithm, we show how to apply it to a close variant of {\msp}
that we call \emph{Sample-Based Matroid Secretary Problem} (\sbmsp). {\sbmsp} shares the instance structure and
objective of {\msp}. However, the interaction of the algorithm with the instance is
different and it does not know the size $n$ of the matroid in advance. Initially, the algorithm must specify a (possibly random) sampling probability $p_s$.
%before any instance specific information is revealed. 
The instance is then revealed in two
phases. In the first phase a random set $S \subseteq N$ containing every element $e \in N$ with
probability $p_s$ is revealed to the algorithm (along with the corresponding weights). The
algorithm, however, is not able to select any element of $S$. In the second phase, the elements of
$N \setminus S$ are revealed (together with their weights) in an adversarial  order that might depend
on the set $S$. Like in {\msp}, the algorithm  also has access to an independence oracle that given a
subset $T \subseteq N$ of elements that \emph{already} arrived answers whether $T \in \cI$.

\begin{reduction} \label{re:timed} 
  Any $\alpha$-competitive algorithm for {\sbmsp} can be transformed efficiently into an
  order-oblivious $\alpha$-competitive algorithm for {\msp}.
\end{reduction}
The above reduction follows from standard arguments, and we defer its formal proof to
Appendix~\ref{sec:redProofs}. Intuitively, an algorithm for \msp{} can be obtained from an algorithm
for \sbmsp{} (using sample probability $p_s$) as follows: let the sample set $S$ be roughly $p_s n$
of the first elements arriving according to the random permutation; and then proceed exactly in the
same manner. The resulting algorithm is order oblivious as the algorithm for \sbmsp{} did not use
any assumption about the arrival order of the elements in the second phase. Note also that it is
necessary to  know the cardinality $n$ of the matroid in the \msp{} problem (in contrast to
\sbmsp{}) so as to be able to form a sample set $S$ that contains each element with probability $p_s$.

Before presenting our algorithm, we need another simple reduction that
allows the algorithm to assume a certain knowledge about the rank $\rho$ of the underlying matroid
$M=(N, \mathcal{I})$ and the weights of its elements. More precisely, we call an algorithm for
{\sbmsp} \emph{aided} if it assumes access to two additional values $\tilde{\rho}$ and $W$ such
that the considered matroid and these values satisfy:
\begin{compactenum}[(i)]
\item $\tilde{\rho} \geq \rho$, where $\rho$ is the rank of $M$,
\item for every element $e \in N$, $w(e) \in (W / (8\tilde{\rho}), W]$.%\footnote{For some
%    instances of {\sbmsp} there does not exist a legal pair of values $\tilde{\rho}$ and
  %  $W$. An aided algorithm is guaranteed not to get such an instance.}
\end{compactenum}

\begin{reduction} \label{re:to_aided_algorithms}
  Any $\alpha(\tilde{\rho})$-competitive aided
  algorithm for {\sbmsp}, where $\alpha(\cdot)$ is a non-decreasing function,
  can be transformed efficiently into
a $160\cdot \alpha(4\rho)$-competitive
  (non-aided) algorithm for {\sbmsp}.
%More precisely, the non-aided algorithm is $160 \cdot \alpha(4\rho)$-competitive.
\end{reduction}

The main idea of the above reduction is to sample half of the elements, and based on this sample
estimate $W$ and $\rho$. Using these estimates, the aided algorithm is then applied to the
remaining elements whose weight fall inside the range $(W / (8\tilde{\rho}), W]$. The details of the
proof are quite standard and are also deferred to Appendix~\ref{sec:redProofs}.
In the rest of this paper, we focus on obtaining an
$O(\log\log\tilde{\rho})$-competitive aided algorithm
for {\sbmsp}. 

To simplify notation, we use '$+$' and '$-$' for
addition and subtraction of a single element from a set, \eg
$S+e-f=(S\cup \{e\})\setminus \{f\}$.
We denote by $r$
the \emph{rank} function of the matroid $M$, \ie for any subset $S\subseteq N$:
$\rank(S)=\max\{|I| \mid I\in \cI, I\subseteq S\}$
 is the size of a maximum cardinality independent
set in $S$.
Furthermore, the \emph{span} of a subset $S\subseteq N$ is given by $\spn(S) = \{e\in N \mid \rank(S+e) = \rank(S)\}$, and its total weight is given by $w(S) = \sum_{e \in S} w(e)$.
We refer the reader to~\cite{schrijver_2003_combinatorial}
for further matroidal concepts, such as contractions and
restrictions of matroids.

Throughout the paper we assume that the rank $\rho$ of the matroid
under consideration is at least $1$. Clearly, if $\rho=0$, then any
algorithm returning a feasible solution, which means the empty set
in this case, is $1$-competitive.

%%% Local Variables:
%%% mode: latex
%%% TeX-master: "main"
%%% End: 

\section{Bucketing-based algorithm}\label{sec:algo}

%
% Explain criteria for accepting an element.
% - one is to protect heavier elements
% - the other one helps to keep the analysis
%   simpler by making things local.
%

\paragraph{Weight classes and buckets.}Our bucketing-based algorithm distinguishes items based on their weight.
We define $h = \lceil 3 + \log_2
\tilde{\rho} \rceil$ \emph{weight classes} as follows. For every
$i \in \{1,\dots, h\}$,  let
\[
C_i = \left\{e \in N ~\middle|~ w(e) \in \left(\frac{W} {2^{h - i + 1}} , \frac{W}{ 2^{h - i}} \right]\right\}.
\] 
Notice that every element belongs to exactly one class; class $C_1$ contains the lightest elements,
$C_2$ slightly less light elements and so on. Moreover, it is possible to determine upon arrival
 which class an element belongs to. 

Our bucketing-based algorithm takes as input
a \emph{bucketing} $\vec{B} = (B_1, B_2, \ldots, B_b)$,
which is a partition of the weight
classes such that each bucket $B_i$ is the union of
a consecutive set of weight classes. More formally,
every bucket $B_i$ is associated with
two numbers $f(B_i) \leq \ell(B_i)$,
where $f(B_i)$ and $\ell(B_i)$
are the first and last index of the weight
classes composing $B_i$,
respectively, \ie $B_i = \bigcup_{j= f(B_i)}^{\ell(B_i)} C_j$. As the
bucketing $\vec{B}$ partitions the weight classes, its buckets satisfy
\[
1 = f(B_1), \qquad
\ell(B_i) + 1 = f(B_{i + 1})
\;\;\forall 1\leq i < b,  \qquad
\text{and} \qquad \ell(B_b) = h.
\]

For ease of notation, we define $B_{b+1} = \varnothing$ and $f(B_i) = \ell(B_i) = 0$ for every $i
\leq 0$. Furthermore, let $B_{\geq i} = \bigcup_{j = i}^{b} B_j$.

\paragraph{Algorithmic overview.} 
Like many other secretary algorithms,
our bucketing-based
algorithm first observes a random set $S$ containing
each element with probability $1/2$, without selecting any
element of $S$. %More formally, $S$ consists of all elements arriving
%up to time $1/2$.
%
Based on the set $S$ we define a matroid $M_i$ for each $i\in \{1,\dots, h\}$ as follows.  $M_i$ is
the matroid obtained from $M$ by first contracting $S\cap B_{\geq i+1}$ and then restricting the
resulting matroid to the elements in $B_i\cap \spn(S\cap B_{\geq i-1})$ (for $i = 1$, we restrict to $B_1$ instead). The restrictions and contractions effectively partition the problem into disjoint matroids, from which we independently pick elements greedily. The aim of this partition is to protect heavy
elements from being spanned by lighter elements. Notice that we restrict to $B_i\cap \spn(S\cap B_{\geq i-1})$ instead of, the perhaps more natural,
$B_i \cap \spn(S \cap B_ {\geq i})$. This choice of the restriction is required for ensuring the
performance guarantee of the algorithm as analyzed in Lemma~\ref{le:meta_algorithm_performance}.
In typical matroid notation, where contractions are
denoted by a slash (`$/$') and restrictions by a vertical bar (`$|$'),
the $M_i$'s are defined as
%\begin{align*}
%M_1 &= (M / (S\cap B_{\geq 2})) |_{B_1}, \\
%M_i &= (M / (S\cap B_{\geq i+1}))|_{B_i\cap
 % \spn(S\cap B_{\geq i-1})} \qquad \forall i\in \{2,\dots, h\}.
%\end{align*}
\[
M_1 = (M / (S\cap B_{\geq 2})) |_{B_1} \qquad \mbox{and} \qquad
M_i = (M / (S\cap B_{\geq i+1}))|_{B_i\cap
  \spn(S\cap B_{\geq i-1})} \qquad \forall i\in \{2,\dots, h\}.
\]
In particular, the ground set of $M_1$ is $N_1 = B_1$,
and the ground set of $M_i$ for $i\in \{2,\dots, h\}$
is $N_i = B_i\cap \spn(S\cap B_{\geq i-1})$.
Furthermore, for $i\in \{1,\dots, h\}$, let $\mathcal{I}_i$
be the collection of $M_i$'s independent sets; hence, $M_i=(N_i,\mathcal{I}_i)$.
We partition the matroids $\{M_i\}_{i = 1}^h$ into two groups according
to the parity of their index.
Let $H_{\odd} = \{i\in \{1,\dots, h\}\mid i \text{ odd}\}$,
and $H_{\even}=\{i\in \{1,\dots, h\}\mid i \text{ even}\}$.
%\begin{align*}
%\mathcal{M}_{\odd} &= \{M_i \mid i\in H_{\odd}\},\\
%\mathcal{M}_{\even} &= \{M_i \mid i\in H_{\even}\}.
%\end{align*}

After having observed $S$, our bucketing-based algorithm
chooses at random $H\in \{H_{\odd}, H_{\even}\}$,
and then greedily accepts elements from each matroid
$M_i$ with $i\in H$
as long as independence is preserved within $M_i$.
At the end of the algorithm, a set $T_i\in \mathcal{I}_i$ has been selected for each $i\in H$ and the algorithm
returns $T= \bigcup_{i\in H} {T_i}$, which, as we show later, satisfies $T\in \mathcal{I}$. The
reason why the algorithm restricts itself to either the odd index or even index matroids is
to ensure the feasibility of $T$. The random choice of $H$ averages over the two possibilities, and allows elements of both even and odd buckets a chance to be selected.

\paragraph{Analysis of feasibility.} Algorithm~\ref{alg:BucketsSecretary} is a pseudocode representation
of our bucketing-based algorithm.
\begin{algorithm}[h!t]
\caption{\textsf{Bucketing-based
algorithm}$(\vec{B})$} \label{alg:BucketsSecretary}
Let $S$ be a set containing every element with probability $1/2$.\\
Let $H=H_{\odd}$ with probability $1/2$, and $H=H_{\even}$ otherwise.\\
Let $T_i \leftarrow \varnothing$ for $i\in H$.\\
\For{every element $e \in N \setminus S$ revealed}
{
	Let $i$ be the index of the bucket containing $e$ (\ie $e \in B_i$).\\
	\If{$i \in H$ \textbf{and} $e\in N_i$
     \textbf{and} $e+T_i\in \cI_i$\label{algline:acceptCond}}%
	{
		Add $e$ to $T_i$.\\
	}
}
Return $T= \bigcup_{i\in H} T_i$.
\end{algorithm}
Our first step is to verify that the conditions of accepting
an element, described in line~\ref{algline:acceptCond},
can indeed be verified with the information available
to the algorithm at that point.
To this end we show that the conditions $e\in N_i$
and $e+T_i\in \mathcal{I}_i$ (where $e\in B_i$)
are equivalent to\\[0.2em]
\hspace*{2em}\begin{minipage}{\linewidth}
\begin{compactenum}[(a)]
\item\label{item:cond1} $i=1$ \textbf{or}
  $e\in \spn(S\cap B_{\geq i-1})$ if $i>1$, and
\item\label{item:cond2}
  $e\not\in \spn(T_i\cup (S\cap B_{\geq {i+1}}))$.
\end{compactenum}
\end{minipage}
\smallskip

\noindent The equivalence between $e\in N_i$
and \eqref{item:cond1} for an element $e \in B_i$ follows immediately from
the definition of $N_i$, since $N_1=B_1$
and $N_i = B_i\cap \spn(S\cap B_{\geq i-1})$
for $i\geq 2$.
%
%First observe that the description in
%Algorithm~\ref{alg:BucketsSecretary} of when we accept
%an element $e$ indeed matches our previous description;
%the conditions $i\in H$ and $e\in N_i$ make sure that
%$e$ is in the ground set of one of the matroids
%$M_i$ for $i\in H$.
%
Furthermore, Lemma~\ref{lem:indCheck} below shows the equivalence between $e+T_i\in \mathcal{I}_i$
and~\eqref{item:cond2}. Notice
that~\eqref{item:cond1} and~\eqref{item:cond2} only
depend on the element $e$, the set $S$,
and the set $T_i$ of elements selected so far within
$N_i$. Thus, these conditions
can be checked by the algorithm
on line~\eqref{algline:acceptCond} without knowing
the matroid $M$ in advance.

\begin{lemma}\label{lem:indCheck}
Let $S\subseteq N$, $T_i\in \mathcal{I}_i$ and $e\in N_i$
for some $i\in \{1,\dots, h\}$. Then
$T_i + e \in \mathcal{I}_i$ if and only
if $e\not\in \spn(T_i\cup (S\cap B_{\geq i+1}))$.
\end{lemma}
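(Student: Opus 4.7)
The plan is to unpack the construction of $M_i$ one operation at a time and then invoke standard matroid facts about how independence, rank, and span interact with contraction and restriction.

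First I would reduce the restriction to a triviality. Since $T_i \in \mathcal{I}_i$ and $e \in N_i$, we have $T_i + e \subseteq N_i$, so $T_i+e$ lies entirely inside the ground set of the restriction. Independence under restriction to a subset $A$ coincides with independence in the parent matroid whenever the set in question is contained in $A$. Hence $T_i + e \in \mathcal{I}_i$ is equivalent to $T_i + e$ being independent in the contracted matroid $M/K$, where $K := S \cap B_{\geq i+1}$ (and $K = S \cap B_{\geq 2}$ when $i=1$, which fits the same pattern).

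Next I would invoke the rank formula for contractions, $r_{M/K}(T) = r_M(T \cup K) - r_M(K)$, which translates spans as $\spn_{M/K}(T) = \spn_M(T \cup K) \setminus K$. Applying this with $T = T_i$: because $T_i \in \mathcal{I}_i$, we have $r_M(T_i \cup K) = |T_i| + r_M(K)$. Then $T_i+e$ is independent in $M/K$ iff $r_M((T_i + e) \cup K) = |T_i|+1+r_M(K) = r_M(T_i \cup K) + 1$, i.e., iff $e \notin \spn_M(T_i \cup K)$. Since $e \in B_i$ is disjoint from $K \subseteq B_{\geq i+1}$, there is no issue with $e$ lying in the contracted ground set.

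Chaining these two equivalences yields $T_i + e \in \mathcal{I}_i \iff e \notin \spn_M(T_i \cup (S \cap B_{\geq i+1}))$, which is exactly the claim. I do not anticipate any real obstacle here; the only care needed is to verify that the restriction step is lossless (guaranteed by the hypothesis $e \in N_i$, $T_i \subseteq N_i$) and to note that the argument is uniform in $i$, so no separate case analysis for $i=1$ is required beyond reading $B_{\geq 2}$ in place of $B_{\geq i+1}$.
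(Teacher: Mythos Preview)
Your proposal is correct and follows essentially the same approach as the paper: both arguments use the rank formula for contraction to reduce independence in $M_i$ to the span condition in $M$. The paper compresses your two steps (restriction is lossless since $T_i+e\subseteq N_i$, then apply the contraction rank formula) into a single invocation of $r_i(U)=r(U\cup(S\cap B_{\geq i+1}))-r(S\cap B_{\geq i+1})$, but the content is identical.
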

Before proving the lemma,
we recall that since $M_i$ is a contraction and
restriction of $M$, we can use standard
results in matroid theory to express the
rank function $r_i$ of $M_i$ in 
terms of the rank function $r$ of $M$
as follows.
\begin{equation}\label{eq:ri}
r_i(U) = r(U \cup (S\cap B_{\geq i+1})) - r(S\cap B_{\geq i+1})
\qquad \forall U\subseteq N_i.
\end{equation}
\begin{proof}[Proof of Lemma~\ref{lem:indCheck}]
We have $T_i+e\in \mathcal{I}_i$ if and only if
$r_i(T_i+e) = r_i(T_i) +1$, which, by~\eqref{eq:ri},
is equivalent to
\begin{equation*}
r((T_i+e)\cup (S\cap B_{\geq i+1}))
  = 1+r(T_i \cup (S\cap B_{\geq i+1})),
\end{equation*}
which in turn is equivalent to
$e\not\in \spn(T_i\cup (S\cap B_{\geq i+1}))$.
\end{proof}

%Hence, Lemma~\ref{lem:indCheck} guarantees that the
It is clear that the sets $\{T_i\}_{i \in H}$ constructed by
Algorithm~\ref{alg:BucketsSecretary} indeed satisfy
$T_i\in \mathcal{I}_i$, since the property
$T_i\in \mathcal{I}_i$ is preserved throughout the
algorithm.
Lemma~\ref{lem:composition} below implies that
the returned set $T$ is independent in $M$.

\begin{lemma}\label{lem:composition}
Let $H\in \{H_{\odd},H_{\even}\}$ and let
$I_i\in \mathcal{I}_i$ for $i\in H$. Then
$\bigcup_{i\in H} I_i \in \mathcal{I}$.
\end{lemma}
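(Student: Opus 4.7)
The plan is to order $H = \{i_1 < i_2 < \cdots < i_k\}$, noting that consecutive indices differ by at least $2$ since $H$ is either all odd or all even, and to prove by downward induction on $t$ that $U_t := \bigcup_{j \geq t} I_{i_j} \in \mathcal{I}$. The base case $t = k+1$ gives the empty set, and the conclusion of the lemma is the case $t = 1$. In the inductive step I take $U_{t+1} \in \mathcal{I}$ as given and need to show that $U_t = I_{i_t} \cup U_{t+1}$ is independent in $M$.

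The bridge between $I_{i_t}$ (defined via $M_{i_t}$, in which $S \cap B_{\geq i_t+1}$ has been contracted) and $U_{t+1}$ (whose elements need not lie in $S$) will be the set $X := S \cap B_{\geq i_t+1}$. First, applying the rank formula~\eqref{eq:ri} to $I_{i_t}$ yields $r(I_{i_t} \cup X) = |I_{i_t}| + r(X)$, so for any basis $B_X$ of $X$ in $M$ the set $I_{i_t} \cup B_X$ is independent; in particular, no element of $I_{i_t}$ lies in $\spn(X)$. Second, every $I_{i_j}$ with $j > t$ is contained in $N_{i_j} \subseteq \spn(S \cap B_{\geq i_j - 1})$, and since $i_j \geq i_t + 2$ forces $i_j - 1 \geq i_t + 1$, we obtain $I_{i_j} \subseteq \spn(X)$; hence $U_{t+1} \subseteq \spn(X)$.

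With these two facts in hand, I extend the independent set $U_{t+1}$ to a basis $B'$ of $\spn(X)$. Because $\spn(B') = \spn(X)$, we have $r(I_{i_t} \cup B') = r(I_{i_t} \cup X) = |I_{i_t}| + |B'|$; combined with the disjointness of $I_{i_t}$ and $B' \subseteq \spn(X)$, this forces $I_{i_t} \cup B'$ to be independent in $M$. The induction then closes since $U_t = I_{i_t} \cup U_{t+1}$ is a subset of the independent set $I_{i_t} \cup B'$.

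The main obstacle is precisely the mismatch described above: the matroid $M_{i_t}$ is defined after contracting the sample $X = S \cap B_{\geq i_t+1}$, yet the conclusion asks for independence of $I_{i_t}$ together with $U_{t+1}$, which in general contains no element of $X$. What saves the argument is the seemingly peculiar choice in the paper to restrict $N_i$ to $\spn(S \cap B_{\geq i-1})$ rather than $\spn(S \cap B_{\geq i})$: this index shift, combined with the gap of $2$ between successive elements of $H$, is exactly what forces $U_{t+1} \subseteq \spn(X)$ and unlocks the basis-exchange step.
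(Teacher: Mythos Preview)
Your proof is correct and follows essentially the same approach as the paper: the same downward induction over $H$, the same key observation that $U_{t+1}\subseteq\spn(S\cap B_{\geq i_t+1})$ (using the gap of $2$ together with $N_i\subseteq\spn(S\cap B_{\geq i-1})$), and the same use of the rank formula~\eqref{eq:ri} for $I_{i_t}$. The only cosmetic difference is in the final step: the paper invokes submodularity of $r$ to obtain $r(I_{i_t}\cup U_{t+1})\geq |I_{i_t}|+|U_{t+1}|$ directly, whereas you extend $U_{t+1}$ to a basis of $\spn(X)$ and argue via $r(I_{i_t}\cup B')=r(I_{i_t}\cup X)$; both are standard and equivalent matroid maneuvers.
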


Whereas a formal
proof of Lemma~\ref{lem:composition} can be
found in Appendix~\ref{sec:missingProofs},
we still want to 
give some intuition and link the lemma to previous work.
For simplicity we focus on the case $H=H_{\even}$,
and assume $h=2k$ is even.
For $i\in \{1,\dots, h\}$, let $A_i=\spn(S\cap B_{\geq i})$.
Hence, $A_1 \supseteq A_2 \supseteq \dots \supseteq A_h$.
Notice that for $i\in H_{\even}$, we have
$M_i = (M/A_{i+1})|_{B_i\cap A_{i-1}}$, and hence,
$M_i$ is a restriction of the matroid $M_i'=(M/A_{i+1})|_{A_{i-1}}$.
Therefore, any independent set $I_i$ of $M_i$ is also
independent in $M_i'$.
However, for any sequence of matroids defined by
$M_i'=(M/A_{i+1})|_{A_{i-1}}$
where the sets $A_i$ form a chain
$A_1\supseteq A_3 \supseteq \dots \supseteq A_{h-1}$,
one can easily verify that if $I_i$ is independent in $M_i$
for all $i\in H_{\even}$,
then $\bigcup_{i\in H_{\even}} I_i$ is independent in $M$.
Actually, this reasoning---for a chain formed by different
sets $A_i$---has
already been used in the context of the matroid secretary
problem by Soto~\cite{soto_2013_matroid}.

\begin{corollary}
Algorithm~\ref{alg:BucketsSecretary} returns
an independent set $T\in \mathcal{I}$.
\end{corollary}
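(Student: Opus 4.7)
The plan is to deduce the corollary almost immediately from Lemma~\ref{lem:composition}, once we verify that each component set $T_i$ built by the algorithm is independent in the corresponding matroid $M_i$.

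First I would argue, by induction on the steps of the main \texttt{for} loop, that the invariant $T_i \in \mathcal{I}_i$ holds for every $i \in H$ throughout the execution of Algorithm~\ref{alg:BucketsSecretary}. The base case is immediate since $T_i$ is initialized to $\varnothing \in \mathcal{I}_i$. For the inductive step, the only way $T_i$ changes is when line~\ref{algline:acceptCond} adds an element $e$, and this happens only if $e \in N_i$ and $e + T_i \in \mathcal{I}_i$ (which, by the equivalence established via Lemma~\ref{lem:indCheck} together with condition~\eqref{item:cond1}, is indeed what the algorithm checks). Hence after the addition we still have $T_i \in \mathcal{I}_i$, preserving the invariant.

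Once the loop terminates, we have a family $\{T_i\}_{i \in H}$ with $T_i \in \mathcal{I}_i$ for every $i \in H$, where $H \in \{H_{\odd}, H_{\even}\}$ was the (random) choice made by the algorithm. Applying Lemma~\ref{lem:composition} directly to this family yields $T = \bigcup_{i \in H} T_i \in \mathcal{I}$, which is exactly the conclusion of the corollary.

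There is essentially no obstacle here: the corollary is a direct consequence of Lemma~\ref{lem:composition}, and the only thing to check is the bookkeeping invariant $T_i \in \mathcal{I}_i$, which is immediate from the acceptance condition on line~\ref{algline:acceptCond}. The only subtlety worth mentioning explicitly is that the conditions checked by the algorithm (the two conditions~\eqref{item:cond1} and~\eqref{item:cond2}) really do coincide with $e \in N_i$ and $e + T_i \in \mathcal{I}_i$; but this equivalence has already been established in the paragraphs immediately preceding Lemma~\ref{lem:indCheck}, so it can simply be invoked.
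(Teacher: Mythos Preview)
Your proposal is correct and follows essentially the same approach as the paper: the paper's proof simply invokes Lemma~\ref{lem:composition} together with the observation (already noted in the text preceding the corollary) that each $T_i$ constructed by the algorithm satisfies $T_i\in\mathcal{I}_i$. Your inductive verification of the invariant $T_i\in\mathcal{I}_i$ just spells out this observation in slightly more detail.
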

\begin{proof}
This is an immediate result of Lemma~\ref{lem:composition}
and the fact that each set in $\{T_i\}_{i \in H}$ constructed by Algorithm~\ref{alg:BucketsSecretary}
is independent in its corresponding matroid $M_i$.
\end{proof}

\paragraph{Analysis of performance guarantees.} Whereas decomposition approaches similar to the above
have already been
used~(see, e.g.,~\cite{jaillet_2013_advances,soto_2013_matroid}),
a main novelty of our approach is the way we lower bound
the likelihood of elements to be selected.
It turns out that selection
probabilities can be elegantly lower bounded in terms of the
following probabilities
\begin{equation*}
p_{e,i} = \Pr[e\in \spn(S\cap C_{\geq i})\mid e\not\in S]
\qquad \forall e\in N, i\in \{1,\dots, h\},
\end{equation*}
where $C_{\geq i} = \bigcup_{j=i}^h C_j$.
We remind the reader that $C_j$ denotes the $j$-th
weight class.
For consistency,
we also define $p_{e,0}=1$ for every $e\in N$.
Notice that $p_{u,i}$ is non-increasing in $i$.
The following two lemmata describe lower bounds
on the selection probabilities.
We recall that for any bucket $B_i$, the expression
$f(B_i)\in\{1,\dots, h\}$ denotes the lowest
index of all weight classes contained in $B_i$. 

\begin{lemma} \label{le:meta_algorithm_performance}
For every element $e \in B_i$,
\[
	\Pr[e \in T]
	\geq
	\frac{p_{e, f(B_{i-1})} - p_{e, f(B_i)}}{4}
	.
\]
\end{lemma}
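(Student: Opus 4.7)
The plan is to isolate a clean sufficient condition for $e$ to be selected, phrased purely in terms of $S$, and then relate its probability to the target quantity via a complementation involution inside $B_i\setminus\{e\}$.

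First I would peel off the factor $1/4$: the events $\{e\notin S\}$ and $\{i\in H\}$ are independent, each has probability $1/2$, and both are necessary for $e\in T$, so
\[
\Pr[e\in T] \;=\; \tfrac{1}{4}\Pr[e\in T_i \mid e\notin S,\, i\in H].
\]
It thus suffices to show the remaining conditional probability is at least $\Delta_e := p_{e,f(B_{i-1})} - p_{e,f(B_i)}$.

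Next I would exhibit a condition on $S$ alone that forces $e \in T_i$ irrespective of the adversarial arrival order in phase two. Define
\[
F_e \;:=\; \bigl(B_i\setminus (S\cup\{e\})\bigr) \;\cup\; \bigl(S\cap B_{\geq i+1}\bigr).
\]
At every moment of the greedy execution inside $M_i$, the running set $T_i'$ is contained in $B_i\setminus S$ and excludes $e$ (as $e$ has not yet been processed), so $T_i'\cup (S\cap B_{\geq i+1}) \subseteq F_e$. By monotonicity of $\spn$ and Lemma~\ref{lem:indCheck}, whenever $e\in N_i$ and $e\notin \spn(F_e)$, the algorithm must add $e$ to $T_i$ upon its arrival. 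Writing $A := \{e\in N_i\}$ and $B := \{e\notin \spn(F_e)\}$, we therefore obtain
\[
\Pr[e\in T_i \mid e\notin S,\, i\in H] \;\geq\; \Pr[A\cap B \mid e\notin S].
\]

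The crux is now to compare $\Pr[A\cap B \mid e\notin S]$ with $\Delta_e$. Set $B' := \{e\notin \spn(S\cap B_{\geq i})\}$. Because $\spn(S\cap B_{\geq i})\subseteq \spn(S\cap B_{\geq i-1})$, one has $\neg B' \Rightarrow A$ and a direct computation (using the convention $p_{e,0}=1$ to handle $i=1$, where $A$ is trivially true) gives $\Delta_e = \Pr[A\cap B' \mid e\notin S]$. Consider the involution $\sigma(X) := (B_i\setminus\{e\})\setminus X$ on the subsets of $B_i\setminus\{e\}$. Conditional on $e\notin S$, the set $S\cap B_i$ is uniformly distributed on $2^{B_i\setminus\{e\}}$ and independent of $S\cap(N\setminus B_i)$, so replacing $S\cap B_i$ by $\sigma(S\cap B_i)$ preserves the joint distribution; under this replacement $B$ becomes $B'$ and $\neg B'$ becomes $\neg B$, yielding
\[
\Pr[B\cap \neg B' \mid e\notin S] \;=\; \Pr[B'\cap \neg B \mid e\notin S].
\]
Combined with the identity $A\cap B\cap \neg B' = B\cap \neg B'$ (from $\neg B'\Rightarrow A$),
\[
\Pr[A\cap B] - \Pr[A\cap B'] \;=\; \Pr[B\cap \neg B'] - \Pr[A\cap B'\cap \neg B] \;=\; \Pr[\neg A\cap B'\cap \neg B] \;\geq\; 0,
\]
with all probabilities conditional on $e\notin S$. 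Chaining the inequalities yields the lemma.

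The main obstacle I anticipate is guessing the correct surrogate $F_e$: one needs a set that both dominates $T_i'\cup (S\cap B_{\geq i+1})$ uniformly in the arrival order and is connected to $S\cap B_{\geq i}$ by a measure-preserving symmetry. Taking the \emph{complement} of $S$ inside $B_i\setminus\{e\}$ (unioned with the same $S\cap B_{\geq i+1}$) is exactly what lets a single bit-flip involution on $B_i$ identify $B$ with $B'$ without perturbing the events defined on $S$ outside of $B_i$; once this is set up, the rest is bookkeeping with span monotonicity.
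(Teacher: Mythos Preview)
Your proof is correct and rests on the same key idea as the paper's: the complementation involution on $S\cap(B_i\setminus\{e\})$, which identifies the event $\{e\in\spn(F_e)\}$ with $\{e\in\spn(S\cap B_{\geq i})\}$ while leaving the distribution of $S$ outside $B_i$ untouched. The paper organizes the algebra a touch more economically---it only uses the involution to equate the \emph{marginals} $\Pr[\neg B\mid e\notin S]=\Pr[\neg B'\mid e\notin S]=p_{e,f(B_i)}$ and then concludes via the plain union bound $\Pr[A\cap B]\geq \Pr[A]-\Pr[\neg B]$---whereas you track the joint distribution of $B$ and $B'$ to show $\Pr[A\cap B]\geq \Pr[A\cap B']$ directly; both routes land on the same final inequality.
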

\begin{proof}
Let $G$ be the event that $i \in H$. If $G$ does not occur, then,
clearly, $e \not \in T$. Thus, in the rest of the proof we
assume $G$ occurs, and implicitly condition all the expectations
on this assumption. As a result, the lower bound we prove applies
in fact to $\Pr[e \in T \mid G] = 2 \cdot \Pr[e \in T]$.

Recall that $e$ gets selected by Algorithm~\ref{alg:BucketsSecretary}
if $e\not\in S$ and it obeys two conditions: $e\in N_i$
and $e + T_i^e \in \cI_i$, where % $e\not\in \spn(T_i^e\cup(S\cap B_{\geq i+1}))$, where
$T_i^e$ is the set $T_i$ immediately before $e$ is revealed.

If $i = 1$, then the first condition always holds
since then $e\in B_1 = N_1$, \ie it holds
with a probability of $1 = p_{e, f(B_{i - 1})} = p_{e,0}$.
Otherwise, $N_i=B_i\cap \spn(S\cap B_{\geq i-1})$,
which implies, together with $e \in B_i$, the equality:
\begin{align*}
	\Pr[e\in N_i\mid e\not\in S]
  =
  \Pr[e \in \spn(S \cap B_{\geq i -1}) \mid e \not \in S]
	= p_{e, f(B_{i-1})}.
\end{align*}
Hence, in both cases, the first condition is satisfied (conditioned on $e \not \in S$) with
probability $p_{e,f(B_{i-1})}$.

We now proceed to analyze the probability of the second condition $e + T^e_i \in \cI_i$, which, by
Lemma~\ref{lem:indCheck}, can be equivalently stated as 
$e \not \in \spn(T_i^e \cup (S \cap B_{\geq i + 1}))$.
Let $\bar{S}=N\setminus S$. Then,
\begin{align*}
 \Pr[e + T^e_i \not \in \cI_i \mid e \not \in S] & =  	\Pr[e \in \spn(T_i^e \cup (S \cap B_{\geq i + 1})))
    \mid e \not \in S] \\
      	&\leq
	\Pr[e \in \spn((\bar{S} \cap B_i) \cup (S \cap B_{\geq i + 1}))
    \mid e \not \in S]\\
        &= 
        \Pr[e \in \spn(S\cap B_{\geq i})\mid e\not\in S] 
        = p_{e,f(B_i)},
\end{align*}
where the inequality follows from $T_i^e\subseteq \bar{S}\cap B_i$,
and the second equality follows from the fact that
$\bar{S}\cap B_i$ and $S\cap B_i$ are identically distributed and both sets are independent of $S\cap B_{\geq {i+1}}$.

In conclusion, $e$ is accepted (conditioned on $G$) with probability 
\begin{align*}
	\Pr[e \not \in S \text{ \textbf{and} }
    e \in N_i \textbf{ and } e + T_i^e \in \cI_i] %e \not \in
%        \spn(T_i^e \cup (S \cap B_{\geq i + 1}))] \\
%
	&\geq 
	\Pr[e \not \in S] \cdot \Big(\Pr[e\in N_i \mid e \not \in S] 
      - \Pr[e + T^e_i \not \in \cI_i 
           \mid e \not \in S]\Big)\\
%
%	&= 
%	\Pr[e \not \in S] \cdot \Big(\Pr[e\in N_i \mid e \not \in S] 
%
  %    - \Pr[e \in \spn(T_i^e \cup (S \cap B_{\geq i + 1}))
    %       \mid e \not \in S]\Big)\\
%
	&\geq
	\frac{p_{e, f(B_{i-1})} - p_{e, f(B_i)}}{2}
	. \qedhere
\end{align*}
\end{proof}

\begin{lemma}\label{lem:metaSelectMany}
For every $i\in \{1,\dots, b\}$,
\begin{equation*}
\E[|T\cap B_i|] \geq \frac{1}{4}
\sum_{e\in B_i\cap \OPT} p_{e,f(B_{i-1})}
\geq
\frac{1}{4}
\sum_{e\in B_i\cap \OPT} p_{e,f(B_i)}
.
\end{equation*}
\end{lemma}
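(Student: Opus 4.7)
The plan is to pass from a per-element probability bound to a matroid rank bound, and then exhibit an explicit large independent set inside $M_i$ coming from $\OPT$.

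First, since the matroid greedy produces a maximum-cardinality independent set regardless of the order in which elements are scanned, conditioning on $i \in H$ gives $|T_i| = r_i(N_i \setminus S)$; and since $H$ is chosen independently of $S$ and of the arrival order with $\Pr[i \in H] = 1/2$, we obtain $\E[|T \cap B_i|] = \tfrac{1}{2}\,\E[r_i(N_i \setminus S)]$. So it suffices to establish $\E[r_i(N_i \setminus S)] \geq \tfrac{1}{2}\sum_{e \in O_i} p_{e, f(B_{i-1})}$, where $O_i := \OPT \cap B_i$. Toward this, I would single out the candidate witness $O_i' := O_i \cap (N_i \setminus S) \subseteq N_i \setminus S$.

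The main technical claim I would prove is that $O_i'$ is independent in $M_i$, which immediately yields $r_i(N_i \setminus S) \geq |O_i'|$. The key ingredient is the max-weight optimality of $\OPT$ combined with the strict weight separation between $B_i$ and $B_{\geq i+1}$: any $e' \in B_{\geq i+1}$ is strictly heavier than every $f \in B_i$ (the lower end of any weight class in $B_{\geq i+1}$ exceeds the upper end of any class in $B_i$), so for $e' \in B_{\geq i+1} \setminus \OPT$ the unique circuit of $\OPT + e'$ cannot meet $O_i$; otherwise swapping an element of $O_i$ for $e'$ would produce a heavier independent set, contradicting optimality. Hence $B_{\geq i+1} \subseteq \spn(\OPT \setminus O_i)$, so setting $X := S \cap B_{\geq i+1}$ and $Y := \OPT \setminus O_i$ we get $X \subseteq \spn(Y)$. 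Since $Y$ is independent, $r(Y \cup X) = |Y|$; and since $O_i' \cup Y \subseteq \OPT$ is independent with $X \subseteq \spn(Y)$, also $r(O_i' \cup X \cup Y) = |O_i'| + |Y|$. Applying submodularity of $r$ to the pair $O_i' \cup X$ and $Y \cup X$, whose intersection is exactly $X$ (because $O_i' \cap Y = \varnothing$ and $O_i' \cap X = \varnothing$), yields $r(O_i' \cup X) \geq |O_i'| + r(X)$, which together with the matching submodular upper bound forces equality. This is precisely the condition that $O_i'$ be independent in $M/X = M/(S \cap B_{\geq i+1})$, and hence in $M_i$.

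Finally, a direct computation gives $\E[|O_i'|] = \sum_{e \in O_i} \Pr[e \notin S \text{ and } e \in N_i] = \tfrac{1}{2}\sum_{e \in O_i} p_{e, f(B_{i-1})}$, using $\Pr[e \notin S] = 1/2$, the definition of $p_{e, f(B_{i-1})}$ for $i \geq 2$, and the convention $p_{e,0} = 1$ that covers the $i = 1$ boundary. Chaining this with the earlier bounds delivers the first inequality of the lemma; the second is immediate from $f(B_{i-1}) < f(B_i)$ and the monotonicity of $p_{e, \cdot}$. The main obstacle is proving the matching lower bound $r(O_i' \cup X) \geq |O_i'| + r(X)$: a naive submodularity argument produces only the reverse inequality, and reaching this bound genuinely requires the spanning statement $B_{\geq i+1} \subseteq \spn(\OPT \setminus O_i)$ supplied by the max-weight exchange.
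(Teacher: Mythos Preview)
Your proof is correct and follows essentially the same approach as the paper: reduce to bounding $\E[r_i(N_i\setminus S)]$, show via a spanning fact derived from the optimality of $\OPT$ together with submodularity that $\OPT\cap B_i\cap N_i$ (restricted to $\bar S$) is independent in $M_i$, and then compute the expectation elementwise. The only cosmetic differences are that the paper uses the slightly tighter span $\spn(\OPT\cap B_{\geq i+1})$ in place of your $\spn(\OPT\setminus O_i)$ and invokes the diminishing-returns form of submodularity rather than the lattice form, but the argument is the same.
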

The proof of the above Lemma is deferred to
Appendix~\ref{sec:missingProofs}.
Notice that the second inequality of
Lemma~\ref{lem:metaSelectMany} follows immediately
from the fact that $p_{e,i}$ is non-increasing in $i$.

% -----------------------------------------------------------------
% -----------------------------------------------------------------
% -----------------------------------------------------------------

%%% Local Variables:
%%% mode: latex
%%% TeX-master: "main"
%%% End: 

\section{Full algorithm} \label{sec:full_algo}

Our full algorithm chooses a random bucketing~$\vec{B}$
according to a well-chosen distribution and then calls
the bucketing-based algorithm with~$\vec{B}$ as input.
The random bucketing~$\vec{B}$ is chosen such that all
buckets $B_i$ have the same length $\ell(B_i)-f(B_i)$,
i.e., each contains the same number of weight classes,
except for, possibly, the first and last buckets, which may
be shorter. This common length of the buckets is
chosen to be a power of two, $2^\tau$,
drawn uniformly at random from all lengths
that are powers of two and lie between $1$ and
the first power of two that is at least $h+1$.
In other words, $\tau$ is drawn uniformly at random
from $0,1,\dots, \lceil \log_2(h+1) \rceil$.
Furthermore, a uniform random shift
$\Delta\in \{0,\dots, 2^\tau -1\}$ defines
where the first bucket ends: the first
bucket contains the lightest $2^\tau -\Delta$
weight classes, and, as described, each following
bucket bundles $2^\tau$ weight classes until the
last bucket which may have a shorter length.
Figure~\ref{fig:randBucketing}
exemplifies the bucketing.

\begin{figure}[h!]
\resizebox{\linewidth}{!}{%
\begin{tikzpicture}[font={\large}]

\begin{scope}[every node/.style={circle, draw,inner sep=0pt,minimum size=8mm}]

\foreach \i in {1,...,6} {
   \node at (\i,0) {$C_\i$};
}
\foreach \i in {7,...,10,13,14,15} {
  \node at (\i,0) {};
}
\node at (16,0) {$C_h$};

% dotted nodes
\begin{scope}[every node/.append style={dotted}]
\foreach \i in {-2,-1,0,17,18} {
 \node at (\i,0) {};
}
\end{scope}

\draw[loosely dotted,very thick] (10.6,0) -- (12.4,0);

\end{scope}

% buckets
\def\eps{0.05}
\begin{scope}[xshift=-0.5cm, yshift=-0.5cm,rounded corners]
\foreach \i in {-2,2,6,15}{
\draw (\i+\eps,0) rectangle (\i+4-\eps,1);
}

% partially drawn buckets
\begin{scope}
\clip (10,-0.1) rectangle (11,1.1);
\draw (10+\eps,0) rectangle (10+4-\eps,1);
\end{scope}

\begin{scope}
\clip (13,-0.1) rectangle (15,1.1);
\draw (11+\eps,0) rectangle (11+4-\eps,1);
\end{scope}

\end{scope} % end of buket drawing

% text for buckets
\begin{scope}
\node[above] at (-0.5,0.5) {$B_1=\{C_1\}$};
\node[above] at (3.5,0.5) {$B_2=\{C_2,\dots, C_5\}$};
\node[above] at (7.5,0.5) {$B_3=\{C_6,\dots, C_9\}$};

\node[above] at (16.5,0.5) {$B_{\lceil (h+\Delta)/2^\tau\rceil}$};
\end{scope}

% vertical bars
\begin{scope}[thick]
\draw (0.5,0.6) -- (0.5,-1);
\draw (16.5,0.6) -- (16.5,-1);
\end{scope}

% tau and Delta
\begin{scope}[thick,yshift=-0.7cm,stealth-stealth,
shorten >= 2pt,shorten <= 2pt]
\draw (1.5,0) -- node[below] {$2^{\tau}=4$} (5.5,0);
\draw (-2.5,0) -- node[below] {$\Delta=3$} (0.5,0);
\end{scope}

\end{tikzpicture}%
}
\caption{An illustration of how
the bucketing is done for $\Delta=3$
and $\tau=2$.}
\label{fig:randBucketing}
\end{figure}
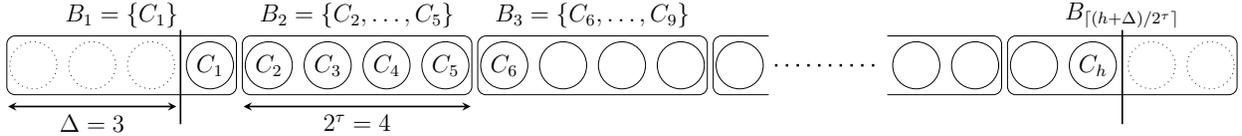

Algorithm~\eqref{alg:LogLogSecretary} is a pseudocode
description of our algorithm.
For ease of notation, we assume in the algorithm that $C_i = \varnothing$ whenever $i > h$ or $i \leq 0$. This assumption allows us to write expressions of the form $\bigcup_{j = i}^{i'} C_j$ where $i' > h$ and/or $i \leq 0$.
Also, recall that, by definition, $f(B_i) = \ell(B_i) = 0$ for every $i \leq 0$.

\begin{algorithm}[h!t]
\caption{Matroid Secretary Algorithm} \label{alg:LogLogSecretary}
Let $\tau$ be a uniformly random number from the range $0,1, \ldots, \lceil \log_2 (h + 1) \rceil$.\\
Let $\Delta$ be a uniformly random number from the range $0, 1, \ldots, 2^\tau - 1$.\\
Let $\vec{B}$ be a bucketing with $\lceil (h + \Delta) / 2^\tau \rceil$ buckets, where bucket $B_i$ is defined as follows:
\[
	B_i = {\textstyle \bigcup_{j = 2^\tau \cdot (i - 1) - \Delta + 1}^{2^\tau \cdot i - \Delta}} C_j.
\]\\
Execute Algorithm~\ref{alg:BucketsSecretary}
with bucketing $\vec{B}$
and return resulting $T\in \mathcal{I}$.
\end{algorithm}

To analyze Algorithm~\ref{alg:LogLogSecretary} we
leverage the two lower bounds on selection
probabilities derived for
the bucketing-based algorithm, \ie
Lemma~\ref{lem:metaSelectMany} and
Lemma~\ref{le:meta_algorithm_performance}.
More precisely, we use Lemma~\ref{lem:metaSelectMany}
to analyze the case when Algorithm~\ref{alg:LogLogSecretary}
runs with $\tau=0$, and employ
Lemma~\ref{le:meta_algorithm_performance}
for $\tau\geq 1$.
We start with the case $\tau=0$.
As usual, let $T$ be the set returned
by Algorithm~\ref{alg:LogLogSecretary}.

\begin{lemma} \label{lem:accept_propT0}
For every $i\in \{1,\dots, h\}$,
\begin{equation*}
\E\Big[|T\cap C_i| ~\Big|~ \tau = 0\Big]
  \geq \frac{1}{4}\sum_{e\in C_i\cap \OPT} p_{e,i}.
\end{equation*}
\end{lemma}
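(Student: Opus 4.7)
The plan is to observe that conditioning on $\tau = 0$ fully determines the bucketing $\vec{B}$, and then directly apply Lemma~\ref{lem:metaSelectMany} to that specific bucketing. There is essentially no genuine obstacle here; the content of the lemma is just the specialization of the selection-probability bound from Section~\ref{sec:algo} to the trivial bucketing in which every bucket is a single weight class.

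More concretely, I would first note that when $\tau = 0$, the random shift $\Delta$ is drawn uniformly from $\{0,\dots,2^{\tau}-1\}=\{0\}$, so $\Delta=0$ deterministically. Plugging $\tau=0$ and $\Delta=0$ into the definition of $\vec{B}$ in Algorithm~\ref{alg:LogLogSecretary}, bucket $B_i$ is the union from $C_{2^0(i-1)-0+1}=C_i$ to $C_{2^0\cdot i-0}=C_i$, hence $B_i = C_i$ for every $i \in \{1,\dots,h\}$, and the bucketing consists of exactly $\lceil h/1\rceil = h$ buckets. In particular $f(B_i)=\ell(B_i)=i$ for every $i$, so $f(B_i)$ equals $i$ in the notation of Lemma~\ref{lem:metaSelectMany}.

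Next, I would apply the second inequality of Lemma~\ref{lem:metaSelectMany}. Since the outcome of Algorithm~\ref{alg:LogLogSecretary} conditional on $\tau = 0$ is exactly the outcome of Algorithm~\ref{alg:BucketsSecretary} run on the deterministic bucketing $\vec{B}=(C_1,\dots,C_h)$, the lemma yields
\begin{equation*}
\E\!\left[|T \cap C_i| \,\Big|\, \tau = 0\right]
= \E[|T \cap B_i|]
\geq \frac{1}{4}\sum_{e \in B_i \cap \OPT} p_{e, f(B_i)}
= \frac{1}{4}\sum_{e \in C_i \cap \OPT} p_{e, i},
\end{equation*}
which is exactly the desired bound.

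The only thing worth double-checking while writing the proof is that the sampling of $S$ inside Algorithm~\ref{alg:BucketsSecretary} is independent of $\tau$ and $\Delta$, so that conditioning on $\tau=0$ does not change the distribution of $S$ or of the internal random choice of $H\in\{H_{\odd},H_{\even}\}$; this is immediate from the algorithm's description. Given that independence, Lemma~\ref{lem:metaSelectMany} applies verbatim with the specific bucketing described above, and the lemma follows in one line.
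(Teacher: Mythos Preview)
Your proof is correct and follows exactly the same approach as the paper's own proof, which simply observes that for $\tau=0$ the bucketing is $\vec{B}=(C_1,\dots,C_h)$ and invokes Lemma~\ref{lem:metaSelectMany}. You have merely spelled out the details (that $\Delta=0$, that $B_i=C_i$ and $f(B_i)=i$, and the independence of $S$ from $\tau$) that the paper leaves implicit.
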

\begin{proof}
For $\tau=0$, the bucketing $\vec{B}$ consists
of the weight classes, i.e., $\vec{B}=\{C_1,\dots, C_h\}$.
The result then immediately follows
from Lemma~\ref{lem:metaSelectMany}.
\end{proof}

\begin{lemma} \label{lem:accept_propTgeq1}
For every $i\in \{1,\dots, h\}$
and $e\in C_i$,
\begin{equation*}
\Pr[e\in T \mid \tau \geq 1 ]
\geq 
\frac{1 - p_{e, i}}{8 \cdot \lceil \log_2 (h + 1) \rceil}.
\end{equation*}
\end{lemma}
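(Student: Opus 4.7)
Plan. Write $T^\star := \lceil \log_2(h+1)\rceil$ and, for fixed $(\tau,\Delta)$, let $k = k(\tau,\Delta) = \lceil (i+\Delta)/2^\tau \rceil$ denote the index of the bucket containing $e$. The plan is to apply Lemma~\ref{le:meta_algorithm_performance} pointwise and then average. Since conditional on $\tau\geq 1$ the random variable $\tau$ is uniform on $\{1,\dots,T^\star\}$, it suffices to establish
\[
\sum_{\tau=1}^{T^\star} \frac{1}{2^\tau}\sum_{\Delta=0}^{2^\tau-1}
  \bigl(p_{e,f(B_{k-1})}-p_{e,f(B_k)}\bigr)
  \;\geq\; \frac{1-p_{e,i}}{2},
\]
from which dividing by $4T^\star$ (accounting for the factor of $4$ in Lemma~\ref{le:meta_algorithm_performance} and the averaging over the $T^\star$ values of $\tau$) recovers the desired bound.

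The key device is the telescoping rewriting $p_{e,a}-p_{e,b}=\sum_{j=a}^{b-1}\delta_j$ with $\delta_j := p_{e,j}-p_{e,j+1}\geq 0$ and $\sum_{j=0}^{i-1}\delta_j = 1-p_{e,i}$ (using $p_{e,0}=1$). After swapping the order of summation, it reduces to proving that for each $j\in\{0,1,\dots,i-1\}$,
\[
N(j) := \sum_{\tau=1}^{T^\star}
          \frac{1}{2^\tau}\,\bigl|\{\Delta : f(B_{k-1})\leq j<f(B_k)\}\bigr|
          \;\geq\; \tfrac{1}{2}.
\]

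For $j\geq 1$ the event $f(B_{k-1})\leq j<f(B_k)$ is equivalent to $C_j\in B_{k-1}$, i.e.\ $C_j$ and $C_i$ lie in adjacent buckets with $C_j$ in the earlier one. Using the explicit formula for $k$ and the fact that bucket boundaries for parameter $\tau$ occur at positions congruent to $1-\Delta\pmod{2^\tau}$, a direct count shows that the number of favorable $\Delta$ equals $\min(d,2^{\tau+1}-d)$ (and $0$ once $d\geq 2^{\tau+1}$), where $d=i-j$. The contributions at $\tau\geq \lceil\log_2 d\rceil$ form the geometric series $\sum d/2^\tau$, and the single ``boundary'' term at $\tau=\lceil\log_2 d\rceil-1$ cancels cleanly, giving $N(j)=2-d/2^{T^\star}\geq 1$ for $d\geq 2$ and $N(j)=1-1/2^{T^\star}\geq \tfrac{1}{2}$ for $d=1$.

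The case $j=0$ is the most delicate: the condition collapses to $k=1$, i.e.\ $e\in B_1$, which is possible only for $\tau\geq \tau_0:=\lceil\log_2 i\rceil$, and in that range one obtains $N(0)=\sum_{\tau=\tau_0}^{T^\star}(2^\tau-i+1)/2^\tau$. This quantity is $\geq 1/2$ provided $T^\star\geq\tau_0+1$ (the generic situation), since each term with $\tau>\tau_0$ contributes at least $1/2$. I expect the main obstacle to be the edge case $\tau_0=T^\star$ (i.e.\ $i$ and $h+1$ in the same dyadic window), where only a single term is available; the slack factor of $2$ between $1/4$ in Lemma~\ref{le:meta_algorithm_performance} and the $1/8$ in the denominator of the claim is precisely what is needed to absorb this boundary effect.
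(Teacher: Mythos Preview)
Your overall strategy—apply Lemma~\ref{le:meta_algorithm_performance} for each fixed $(\tau,\Delta)$ and average—is the same as the paper's, but the bookkeeping differs: rather than summing over all unit increments $\delta_j$, the paper fixes dyadic breakpoints $a_j=i-2^j+1$ and, for each $\tau=j$, isolates a ``good'' half of the shifts $\Delta$ on which $a_{j-1}\le f(B_s)$ and $f(B_{s-1})<a_j$, so that Lemma~\ref{le:meta_algorithm_performance} yields the increment $p_{e,a_j}-p_{e,a_{j-1}}$ directly and the sum over $j$ telescopes to $1-p_{e,i}$. Your per-$j$ counting is an equivalent unrolling of this computation; your formulas $N(j)=2-d/2^{T^\star}$ for $d\ge 2$ and $N(j)=1-1/2^{T^\star}$ for $d=1$ are correct.

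The gap is at $j=0$. There is no spare ``slack factor of $2$'': your target $\sum_j\delta_jN(j)\ge\tfrac12\sum_j\delta_j$, once divided by $4T^\star$, is \emph{exactly} the asserted bound $(1-p_{e,i})/(8T^\star)$; the $\tfrac12$ is the goal, not headroom. In the edge case $\tau_0=T^\star$ the only contributing term gives $N(0)=(2^{T^\star}-i+1)/2^{T^\star}$, which can be well below $\tfrac12$ (e.g.\ $h=7$, $i=7$, $T^\star=3$ give $N(0)=\tfrac14$). Nor can you borrow surplus from the other $j$: if $e$ is a coloop then $p_{e,m}=0$ for every $m\ge 1$, so $\delta_0=1$ and $\delta_j=0$ for $j\ge 1$, and the needed inequality is literally $N(0)\ge\tfrac12$. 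Indeed, for that concrete instance one computes directly that $e$ is selected only when $C_7\subseteq B_1$, giving $\Pr[e\in T\mid\tau\ge1]=\tfrac13\cdot\tfrac{2}{8}\cdot\tfrac14=\tfrac{1}{48}$, whereas the lemma asserts $\ge\tfrac{1}{24}$. The paper's argument meets this same boundary at its top level $j=k$, where it invokes $f(B_{s-1})=f(B_s)-2^\tau$; you should examine that step carefully against the convention $f(B_1)=1$ before trying to imitate it.
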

Before proving Lemma~\ref{lem:accept_propTgeq1}, we show how to
derive from
Lemma~\ref{lem:accept_propT0} and Lemma~\ref{lem:accept_propTgeq1}
an upper bound of $O(\log \log \tilde{\rho})$ on the competitive ratio of Algorithm~\ref{alg:LogLogSecretary}.
Combining both lemmata we first obtain the following.
\begin{corollary}\label{cor:selectInWC}
For every $i\in \{1,\dots, h\}$,
\begin{equation*}
\E\big[|T\cap C_i|\big] \geq
  \frac{|C_i\cap \OPT|}{8(\lceil \log_2(h+1)\rceil+1)} .
\end{equation*}
\end{corollary}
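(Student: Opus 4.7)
The plan is to combine the two conditional bounds via the law of total expectation, exploiting that the two lemmas lower bound $\E[|T \cap C_i|]$ using complementary quantities ($p_{e,i}$ on the event $\tau=0$, and $1-p_{e,i}$ on the event $\tau\geq 1$), so that summing them yields a bound independent of $p_{e,i}$.

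First I would write $L := \lceil \log_2(h+1)\rceil$ for brevity and recall that $\tau$ is uniform on $\{0,1,\dots,L\}$, so $\Pr[\tau=0]=1/(L+1)$ and $\Pr[\tau\geq 1]=L/(L+1)$. By the law of total expectation,
\begin{equation*}
\E[|T\cap C_i|] \;=\; \tfrac{1}{L+1}\,\E[|T\cap C_i|\mid \tau=0] \;+\; \tfrac{L}{L+1}\,\E[|T\cap C_i|\mid \tau\geq 1].
\end{equation*}
Lemma~\ref{lem:accept_propT0} directly handles the first conditional expectation. For the second, I would bound $|T\cap C_i|\geq |T\cap C_i\cap \OPT|=\sum_{e\in C_i\cap \OPT}\mathbf{1}[e\in T]$, take expectations, and invoke Lemma~\ref{lem:accept_propTgeq1} elementwise, producing
\begin{equation*}
\tfrac{L}{L+1}\,\E[|T\cap C_i|\mid \tau\geq 1] \;\geq\; \tfrac{L}{L+1}\sum_{e\in C_i\cap \OPT}\frac{1-p_{e,i}}{8L} \;=\; \frac{1}{8(L+1)}\sum_{e\in C_i\cap \OPT}(1-p_{e,i}).
\end{equation*}

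Next I would weaken the $\tau=0$ contribution from coefficient $1/4$ to $1/8$ so the two sums can be merged symmetrically:
\begin{equation*}
\E[|T\cap C_i|] \;\geq\; \frac{1}{8(L+1)}\sum_{e\in C_i\cap \OPT} p_{e,i} \;+\; \frac{1}{8(L+1)}\sum_{e\in C_i\cap \OPT}(1-p_{e,i}) \;=\; \frac{|C_i\cap \OPT|}{8(L+1)},
\end{equation*}
which is exactly the claimed bound. There is no real obstacle here: the whole point of the statements of Lemmas~\ref{lem:accept_propT0} and \ref{lem:accept_propTgeq1} is that the two bounds interpolate between guarantees that are strong precisely when $p_{e,i}$ is large and small, respectively, and the factor $L$ in the denominator of Lemma~\ref{lem:accept_propTgeq1} is perfectly cancelled by the probability $\Pr[\tau\geq 1]=L/(L+1)$. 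The corollary follows by a two-line calculation.
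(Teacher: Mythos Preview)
Your proof is correct and follows essentially the same approach as the paper's own proof: split $\E[|T\cap C_i|]$ via total expectation according to $\tau=0$ versus $\tau\geq 1$, apply Lemma~\ref{lem:accept_propT0} and Lemma~\ref{lem:accept_propTgeq1} respectively, and observe that the $p_{e,i}$ and $1-p_{e,i}$ terms combine (after weakening the $1/4$ to $1/8$) to give $|C_i\cap\OPT|/[8(L+1)]$. The only cosmetic difference is that the paper sums Lemma~\ref{lem:accept_propTgeq1} over all of $C_i$ before discarding non-\OPT elements, whereas you first pass to $|T\cap C_i\cap\OPT|$ and then apply the lemma elementwise; both routes are equivalent.
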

\begin{proof}
By summing the inequality of Lemma~\ref{lem:accept_propTgeq1}
over all elements $e\in C_i$ we get
\begin{equation}\label{eq:sumT0}
\E\Big[|T\cap C_i| ~\Big|~ \tau\geq 1\Big]
  \geq \frac{\sum_{e\in C_i} (1-p_{e,i})}{8 \lceil \log_2(h+1)\rceil}
  \geq \frac{\sum_{e\in C_i \cap \OPT} (1-p_{e,i})}{8 \lceil \log_2(h+1)\rceil}
  .
\end{equation}
It remains to combine~\eqref{eq:sumT0} with
Lemma~\ref{lem:accept_propT0} to obtain
\begin{align*}
\E\big[|T\cap C_i|\big] &=
  \Pr[\tau = 0] \cdot \E\Big[ |T\cap C_i| ~\Big|~ \tau=0 \Big]
  + \Pr[\tau \geq 1] \cdot \E\Big[ |T\cap C_i| ~\Big|~ \tau \geq 1\Big]\\
&= \frac{1}{\lceil \log_2(h+1)\rceil+1}
\cdot
\left(
  \E\Big[ |T\cap C_i| \Big| \tau=0 \Big]
  + \lceil\log_2(h+1) \rceil \cdot
    \E\Big[ |T\cap C_i| \Big| \tau \geq 1 \Big]
\right)\\
&\geq \frac{1}{\lceil \log_2(h+1)\rceil+1}
\cdot
\left(
\frac{\sum_{e\in C_i\cap \OPT} p_{e,i}}{4}
+ \frac{\sum_{e\in C_i\cap \OPT} (1-p_{e,i})}{8}
\right)\\
&\geq \frac{|C_i\cap \OPT|}{8 (\lceil \log_2(h+1)\rceil+1)}.
\qedhere
\end{align*}
\end{proof}

A lower bound on the competitiveness of
Algorithm~\ref{alg:LogLogSecretary} can now easily be
derived from Corollary~\ref{cor:selectInWC}.
The following Theorem states this lower bound, and shows that it
implies Theorem~\ref{thm:mainThm}.
\begin{theorem}
Algorithm~\ref{alg:LogLogSecretary} is an aided algorithm
for {\sbmsp} whose competitive ratio is at most
$16 \cdot (\lceil \log_2 (h + 1) \rceil+1)
 = 16 \cdot (\lceil \log_2 (4 + \lceil \log_2 \tilde{\rho} \rceil) \rceil+1)
  \leq 16 [\log_2\log_2 (\max\{\tilde{\rho}, 4\}) + 5]$.
Hence, by Reductions~\ref{re:timed} and \ref{re:to_aided_algorithms} there exists
a (non-aided) $2560[\log_2 \log_2 (4\rho) + 5]$-competitive
algorithm for {\msp}.
\end{theorem}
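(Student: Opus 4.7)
The proof proceeds in three steps: first obtain a cardinality-based bound via Corollary~\ref{cor:selectInWC}, then convert it to a weight bound using that elements within one class have weights within a factor of two, and finally wrap up with the two reductions.

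\emph{Step 1 (cardinality to weight per class).} Fix a weight class $C_i$. By definition every element of $C_i$ has weight in $(W/2^{h-i+1}, W/2^{h-i}]$, so any two weights in $C_i$ differ by at most a factor of $2$. In particular, the minimum weight in $C_i$ is strictly larger than $W/2^{h-i+1}$, while the weight of any element of $\OPT \cap C_i$ is at most $W/2^{h-i}$. Combining this with Corollary~\ref{cor:selectInWC} gives
\[
\E[w(T \cap C_i)] \;\geq\; \frac{W}{2^{h-i+1}} \cdot \E[|T \cap C_i|] \;\geq\; \frac{W}{2^{h-i+1}} \cdot \frac{|C_i \cap \OPT|}{8(\lceil \log_2 (h+1)\rceil + 1)} \;\geq\; \frac{w(\OPT \cap C_i)}{16(\lceil \log_2 (h+1)\rceil + 1)},
\]
using $w(\OPT \cap C_i) \le |C_i \cap \OPT| \cdot W/2^{h-i}$ in the last step.

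\emph{Step 2 (summing over classes).} Since the classes $C_1,\dots,C_h$ partition $N$, summing the previous display over $i$ gives
\[
\E[w(T)] \;=\; \sum_{i=1}^{h} \E[w(T \cap C_i)] \;\geq\; \frac{w(\OPT)}{16(\lceil \log_2 (h+1)\rceil + 1)}.
\]
This certifies that Algorithm~\ref{alg:LogLogSecretary}, viewed as an aided algorithm for \sbmsp{} (recall that its inputs are exactly $\tilde\rho$ and $W$, which determine $h$), has competitive ratio at most $16(\lceil \log_2(h+1)\rceil + 1)$.

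\emph{Step 3 (simplifying the bound and applying reductions).} Using $h = \lceil 3 + \log_2 \tilde\rho\rceil$ we have $\lceil \log_2(h+1)\rceil \le \lceil \log_2(4 + \lceil \log_2 \tilde\rho\rceil)\rceil$, which for $\tilde\rho \ge 4$ is easily bounded by $\log_2\log_2 \tilde\rho + 4$, giving the claimed $16[\log_2\log_2(\max\{\tilde\rho,4\}) + 5]$. Since the competitive ratio is a non-decreasing function of $\tilde\rho$, Reduction~\ref{re:to_aided_algorithms} converts this aided \sbmsp{} algorithm into a (non-aided) \sbmsp{} algorithm with competitive ratio $160 \cdot 16[\log_2\log_2(4\rho) + 5] = 2560[\log_2\log_2(4\rho) + 5]$, and Reduction~\ref{re:timed} then turns this into an order-oblivious \msp{} algorithm with the same ratio, yielding Theorem~\ref{thm:mainThm}.

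\emph{Anticipated obstacle.} The conceptual content is entirely in Step 1 — the trick of absorbing the $W/2^{h-i}$ factors from both sides of the cardinality bound into a single factor of $2$, which is what makes the weight classes the ``right'' granularity at which to apply Corollary~\ref{cor:selectInWC}. The only finicky part is Step 3: chasing the two ceilings inside $\lceil \log_2(h+1)\rceil$ to land exactly on the constants $2560$ and $+5$ stated in Theorem~\ref{thm:mainThm}. Everything else is routine bookkeeping and direct appeals to the reductions.
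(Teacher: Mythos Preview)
Your proposal is correct and follows exactly the paper's approach: use Corollary~\ref{cor:selectInWC} together with the factor-of-two spread of weights inside each class to get a per-class weight guarantee, sum over classes, and then invoke Reductions~\ref{re:to_aided_algorithms} and~\ref{re:timed}. The paper's own proof is essentially your Steps~1--2 compressed into one sentence, with the constants in Step~3 taken for granted from the theorem statement.
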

\begin{proof}
By Corollary~\ref{cor:selectInWC} and the fact
that the weights of the elements within each weight class differ
by a factor of at most $2$, the expected weight of the elements selected from
any $C_i$ is at least a
$[16(\lceil \log_2(h+1)\rceil +1)]^{-1}$-fraction
of $w(C_i\cap \OPT)$. By summing this bound
over all weight classes, we obtain that
Algorithm~\ref{alg:LogLogSecretary} is
$16(\lceil \log_2(h+1)\rceil +1)$-competitive.
%\begin{align*}
%\E[w(T)] &= \sum_{i=1}^h \E[w(T\cap C_i)]
%\geq \sum_{i=1}^h \frac{W}{2^{h-i+1}} \E[| T\cap C_i |] \\
%%
%&\geq
%\frac{1}{8(\lceil \log_2(h+1) \rceil +1)}\cdot
%\sum_{i=1}^h \frac{W}{2^{h-i+1}}\cdot
%|C_i\cap \OPT|\\
%%
%&\geq
%\frac{1}{16(\lceil \log_2(h+1) \rceil +1)}\cdot
%\sum_{i=1}^h \frac{W}{2^{h-i}}\cdot
%|C_i\cap \OPT|\\
%%
%&\geq
%\frac{1}{16(\lceil \log_2(h+1) \rceil +1)}
%\sum_{i=1}^h w(C_i\cap \OPT)
%%
%= \frac{w(\OPT)}{16(\lceil \log_2(h+1) \rceil +1)},
%\end{align*}
%where the first inequality follows by the fact that
%every element in $C_i$ has weight at least
%$W/2^{h-i+1}$,
%the second inequality is implies by
%Corollary~\ref{cor:selectInWC}, and the last inequality follows
%since each element in $C_i$ has weight at most
%$W/2^{h-i}$, and thus
%$\frac{W}{2^{h-i}}|C_i\cap \OPT|\geq w(\OPT\cap C_i)$.
\end{proof}

Hence, it remains to prove Lemma~\ref{lem:accept_propTgeq1}.

\begin{proof}[Proof of Lemma~\ref{lem:accept_propTgeq1}]
One can think of the expression $1-p_{e,i}=p_{e,0}-p_{e,i}$
in the statement of the
lemma as the change in $p_{e,j}$ as $j$ goes from $j=i$ to $j=0$. 
We analyze this change by considering smaller intervals.
Let $k=\lceil \log_2(i+1)\rceil$, and define
$a_j = i - 2^j + 1$ for $j\in \{0,\dots, k\}$.
Notice that $a_j \in \{1, \ldots, i\}$ for $j < k$ and $a_k \leq 0$.
Figure~\ref{fig:bucketingProof} illustrates our
choice of the indices $a_i$. Observe that the distance $a_{j - 1} - a_j$ doubles each time $j$ increases by $1$.
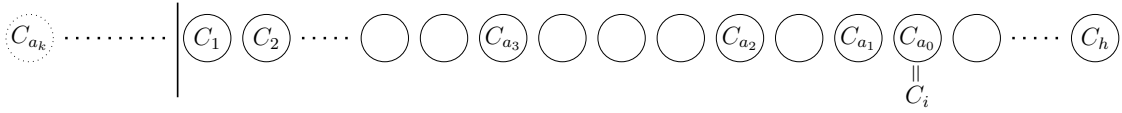
\begin{figure}[h!t]
\hspace{1cm}\resizebox{\linewidth}{!}{%
\begin{tikzpicture}[font={\large}]

\begin{scope}[every node/.style={circle, draw,inner sep=0pt,minimum size=8mm}]

\foreach \i in {1,2} {
   \node at (\i,0) {$C_\i$};
}
\foreach \i in {4,...,14} {
  \node at (\i,0) {};
}
\node at (16,0) {$C_h$};

% dotted nodes
\begin{scope}[every node/.append style={dotted}]
\foreach \i in {-2} {
 \node at (\i,0) {};
}
\end{scope}

% Quick hack to make sure that this pic
% has same width as randBucketing.tikz.
\node[white] at (18,0) {};

\end{scope}

\node at (13,0) {$C_{a_0}$};
\node[yshift=-6mm,rotate=90] (eq) at (13,0) {$=$};
\node[below=2pt] at (eq) {$C_i$};

\node at (12,0) {$C_{a_1}$};
\node at (10,0) {$C_{a_2}$};
\node at (6,0) {$C_{a_3}$};
\node at (-2,0) {$C_{a_k}$};

\draw[loosely dotted,very thick] (-1.4,0) -- (0.4,0);
\draw[loosely dotted,very thick] (2.6,0) -- (3.4,0);
\draw[loosely dotted,very thick] (14.6,0) -- (15.4,0);

% vertical bars
\begin{scope}[thick]
\draw (0.5,0.6) -- (0.5,-1);
\end{scope}

% buckets
\def\eps{0.05}

\end{tikzpicture}%
}
\caption{An illustration of how
the indices $a_i$ are chosen.}
\label{fig:bucketingProof}
\end{figure}

By defining, for ease of notation,
$p_{e,a_k}=1$, we can write
%\begin{equation*}
$1-p_{e,i} = \sum_{j=1}^k (p_{e,a_j} - p_{e,a_{j-1}})$.
%\end{equation*}
%
We show the following:
\begin{equation}\label{eq:intAnalysis}
\Pr[e\in T \mid \tau=j] \geq \frac{1}{8} (p_{e,a_j}-p_{e,a_{j-1}})
\qquad \forall j\in \{1,\dots, k\}.
\end{equation}
First, observe that the lemma follows easily
from the above inequality.
\begin{align*}
\Pr[e\in T\mid \tau \geq 1] &\geq \sum_{j=1}^k
 \underbrace{\Pr[\tau=j \mid \tau\geq 1]}_{=\frac{1}%
        {\lceil\log_2(h+1)\rceil}}
  \cdot \Pr[e\in T \mid \tau=j]\\
&\geq \frac{1}{8\lceil\log_2(h+1)\rceil}\cdot
  \sum_{j=1}^k (p_{e,a_j} - p_{e,a_{j-1}})
 = \frac{1-p_{e,i}}{8\lceil\log_2(h+1)\rceil}.
\end{align*}
Hence, it remains to show~\eqref{eq:intAnalysis}.
Fix some $j\in \{1,\dots, k\}$, and consider an execution
of Algorithm~\ref{alg:LogLogSecretary} with $\tau =j$.
Let $B_s$ be the (random) bucket containing
$C_i$. 
We denote by $G$ the event that the random shift
$\Delta$ of Algorithm~\ref{alg:LogLogSecretary} is such
that $f(B_s)\in \{a_{j-1},a_{j-1}+1,\dots, a_0\}$
(see Figure~\ref{fig:bucketingProof2} for an illustration).
%
%We recommend the reader to consider $j=3$ in
%Figure~\ref{fig:bucketingProof} to exemplify this
%condition and the discussion that follows.
%
%
\begin{figure}[ht]
%\hspace{1cm}\resizebox{\linewidth}{!}{%
\hspace{1.5cm}\resizebox{0.9\linewidth}{!}{%
\begin{tikzpicture}[font={\large}]

\begin{scope}[every node/.style={circle, draw,inner sep=0pt,minimum size=8mm}]

%\foreach \i in {1,2} {
 %  \node at (\i,0) {$C_\i$};
%}
\foreach \i in {2,...,15} {
  \node at (\i,0) {};
}
%\node at (16,0) {$C_h$};

% dotted nodes
\begin{scope}[every node/.append style={dotted}]
%\foreach \i in {-2} {
% \node at (\i,0) {};
%}
\end{scope}

% Quick hack to make sure that this pic
% has same width as randBucketing.tikz.
\node[white] at (18,0) {};

\end{scope}

\node at (13,0) {$C_{a_0}$};
\node[yshift=-6mm,rotate=90] (eq) at (13,0) {$=$};
\node[below=2pt] at (eq) {$C_i$};

\node at (12,0) {$C_{a_1}$};
\node at (10,0) {$C_{a_2}$};
\node[yshift=-6mm,rotate=90] (eq) at (10,0) {$=$};
\node[below=2pt] at (eq) {$C_{a_{j-1}}$};

\node at (6,0) {$C_{a_3}$};
\node[yshift=-6mm,rotate=90] (eq) at (6,0) {$=$};
\node[below=2pt] at (eq) {$C_{a_{j}}$};

%\node at (-2,0) {$C_{a_k}$};

\draw[loosely dotted,very thick] (0.6,0) -- (1.4,0);
\draw[loosely dotted,very thick] (15.6,0) -- (16.4,0);

% vertical bars
\begin{scope}[thick]
%\draw (0.5,0.6) -- (0.5,-1);
\end{scope}

% buckets
\def\eps{0.05}
\begin{scope}[xshift=-0.5cm, yshift=-0.5cm,rounded corners]
\foreach \i in {4}{
\draw (\i+\eps,0) rectangle (\i+8-\eps,1);
}

% partially drawn buckets
\begin{scope}
\clip (1.5,-0.1) rectangle (4,1.1);
\draw (-4+\eps,0) rectangle (-4+8-\eps,1);
\end{scope}

\begin{scope}
\clip (12,-0.1) rectangle (16.5,1.1);
\draw (12+\eps,0) rectangle (12+8-\eps,1);
\end{scope}

\end{scope} % end of buket drawing

% text for buckets
\begin{scope}
\node[above] at (7.5,0.5) {$B_{s-1}$};
\node[above] at (15.5,0.5) {$B_s$};
\end{scope}

\end{tikzpicture}%
}
\caption{For $j=3$, this is a realization
of the random bucketing where the event $G$ occurred.
$G$ occurs when $a_2\leq f(B_s)$, \ie in $4$ offsets out of the $8$ possible offsets in this case.}
\label{fig:bucketingProof2}
\end{figure}
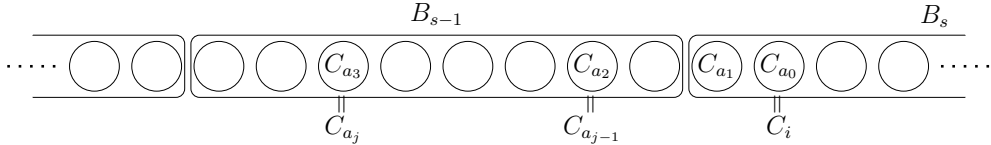
Since the shift $\Delta$ is chosen
uniformly at random
within $0\leq \Delta \leq 2^\tau-1$, we get
\begin{equation*}
\Pr[G] = \frac{a_0 - a_{j-1} +1}{2^\tau} = \frac{2^{j-1}}{2^\tau}
= \frac{1}{2},
\end{equation*}
where the last equality follows by $\tau=j$.
Notice that when $G$ occurs, we have
$f(B_s)\geq a_{j-1}$ and
$f(B_{s-1}) = f(B_s) - 2^\tau \leq i-2^\tau < a_j$.
Hence, we obtain by Lemma~\ref{le:meta_algorithm_performance},
\begin{align*}
\Pr[e\in T\mid \tau =j, G] \geq
\frac{1}{4}(p_{e,f(B_{s-1})} - p_{e,f(B_s)})
\geq \frac{1}{4}(p_{e,a_j} - p_{e,a_{j-1}}),
\end{align*}
where the second inequality follows from
the fact that $p_{e,t}$ is non-increasing in $t$.
Finally,~\eqref{eq:intAnalysis} now easily follows:
\begin{equation*}
\Pr[e\in T\mid \tau =j] \geq
  \Pr[G]\cdot \Pr[e\in T \mid \tau=j, G]
  \geq \frac{1}{8}(p_{e,a_j}-p_{e,a_{j-1}}).
	\qedhere
\end{equation*}

\end{proof}

\bibliographystyle{plain}
\bibliography{lit}
%\bibliography{/home/ricoz/mainsync/research/literature/bibliographies/literature}

\appendix

\section{Formal Proofs of Reductions}\label{sec:redProofs}

In this section we give the formal proofs of the reductions from Section~\ref{sec:preliminaries}. We
start by Reduction~\ref{re:timed} and then continue with Reduction~\ref{re:to_aided_algorithms}.

\begin{repreduction}{re:timed}
Any $\alpha$-competitive algorithm for {\sbmsp} can be transformed efficiently into an order-oblivious $\alpha$-competitive algorithm for {\msp}.
\end{repreduction}

Before proving the reduction itself, let us prove a technical helper lemma (similar lemmata can be found, \eg in~\cite{kleinberg_2005_multiple-choice}).

\begin{lemma} \label{le:random_sample_from_permutation}
Given a random permutation $\pi$ of $N$ and a binomial random variable $X \sim B(n, p)$, the set $S$ of the first $X$ elements in $\pi$ contains every element $e \in N$ with probability $p$, independently.
\end{lemma}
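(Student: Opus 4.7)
The plan is to establish the lemma by directly computing the joint distribution of the membership indicators $\{\mathbf{1}[e \in S]\}_{e \in N}$ and checking that it agrees with the joint distribution of $n$ independent Bernoulli$(p)$ random variables. Concretely, I would fix an arbitrary subset $T \subseteq N$ of size $k$ and compute $\Pr[S = T]$; if this equals $p^{k}(1-p)^{n-k}$ for every such $T$, then in particular, for each $e \in N$, $\Pr[e \in S] = p$, and moreover the full joint law factors as a product, which is precisely what the lemma claims.

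To carry out the computation, I would condition on $X$. Since $|S| = X$, the conditional probability $\Pr[S = T \mid X = j]$ vanishes unless $j = k$. For $j = k$, I would use the fact that $X$ is independent of $\pi$, so conditioning on $X = k$ leaves $\pi$ uniformly distributed on the permutations of $N$; hence the first $k$ entries of $\pi$ form a uniformly random $k$-subset of $N$, which gives $\Pr[S = T \mid X = k] = 1/\binom{n}{k}$. Multiplying by $\Pr[X = k] = \binom{n}{k} p^{k}(1-p)^{n-k}$ yields $\Pr[S = T] = p^{k}(1-p)^{n-k}$, as required.

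Finally, since the joint distribution of the indicator vector $(\mathbf{1}[e \in S])_{e \in N} \in \{0,1\}^{N}$ is fully determined by the probabilities of the atomic events $\{S = T\}$, the coincidence $\Pr[S = T] = p^{|T|}(1-p)^{n-|T|}$ for every $T \subseteq N$ is equivalent to the statement that the indicators are i.i.d.\ Bernoulli$(p)$. No real obstacle is expected here; the only subtle point is to use the independence of $X$ from $\pi$ to justify that conditioning on $X = k$ does not bias the permutation, which is what makes the binomial coefficients cancel cleanly.
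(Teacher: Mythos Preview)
Your proposal is correct. You proceed by direct computation: condition on $X=k$, use the independence of $X$ and $\pi$ to conclude that the first $k$ entries of $\pi$ form a uniformly random $k$-subset, and observe that the binomial coefficients cancel to give $\Pr[S=T]=p^{|T|}(1-p)^{n-|T|}$. The paper instead argues by coupling in the reverse direction: it starts from a set $S$ formed by including each element independently with probability $p$, builds $\pi$ by concatenating a uniformly random permutation of $S$ with one of $N\setminus S$, sets $X=|S|$, and then checks that $X\sim B(n,p)$, that $\pi$ is uniform, and that $X$ and $\pi$ are independent. Your forward computation is arguably the more elementary route and makes the cancellation of the $\binom{n}{k}$ factor completely explicit; the paper's construction is slightly less computational and has the minor conceptual advantage of exhibiting an explicit coupling between the two sampling processes, which is the viewpoint one would want when actually implementing Reduction~\ref{re:timed}. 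Both arguments rely on the same implicit hypothesis that $X$ and $\pi$ are independent, which you correctly flag.
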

\begin{proof}
Consider a set $S \subseteq N$ containing every element of $e \in N$ with probability $p$, independently. Let us construct a permutation $\pi$ of $N$ from $S$ as follows. The first $|S|$ elements of $\pi$ are a uniformly random permutation of $S$, and the $n - |S|$ other elements of $\pi$ are a uniformly random permutation of $N \setminus S$. Also, let $X = |S|$. Notice that $X$ is distributed according to $B(n, p)$, and by definition $S$ contains exactly the first $X$ elements of $\pi$.

By symmetry, $\pi$ is a uniformly random permutation even when conditioned on $X$. Hence, $X$ and $\pi$ are independent, which completes the proof of the lemma.
\end{proof}

We are now ready to prove Reduction~\ref{re:timed}.

\begin{algorithm}[ht]
\caption{\textsf{Algorithm for Reduction~\ref{re:timed}}($ALG$)} \label{alg:AllowSbmsp}
Let $p_s$ be the sampling probability reported by $ALG$.\\
Sample $X \sim B(n, p_s)$.\\
Let $S$ be the set of the first $X$ elements in the random input permutation.\\
Run $ALG$ with $S$ as the first stage and $N \setminus S$ as the second stage.\\ 
\end{algorithm}

\begin{proof}
  Fix an $\alpha$-competitive algorithm $ALG$ for {\sbmsp}. Algorithm~\ref{alg:AllowSbmsp} is an
  algorithm for {\msp} which uses $ALG$. The algorithm begins by sampling $X \sim B(n, p_s)$ and then collects
  the first $X$ elements of the random input permutation into a set $S$. By
  Lemma~\ref{le:random_sample_from_permutation}, $S$ contains every element $e \in N$ with
  probability $p_s$, independently. Hence, $S$ can be used as the input for the first stage of
  $ALG$. The rest of the elements (\ie the elements of $N \setminus S$) are then passed, when
  revealed, to $ALG$ as the second phase input. Since Algorithm~\ref{alg:AllowSbmsp} produces a
  solution as valuable as $ALG$, it is also $\alpha$-competitive. It is important to stress that
  Algorithm~\ref{alg:AllowSbmsp} is order oblivious since $ALG$ assumes nothing about the order in
  which the elements are revealed in the
  second phase; in particular, they might be  ordered adversarially.
\end{proof}

The rest of this section is devoted to proving the following reduction.

\begin{repreduction}{re:to_aided_algorithms}
Any $\alpha(\tilde{\rho})$-competitive aided
algorithm for {\sbmsp}, where $\alpha(\cdot)$ is a non-decreasing function,
can be transformed efficiently into 
a $160\cdot \alpha(4\rho)$-competitive
  (non-aided) algorithm for {\sbmsp}.
%More precisely, the non-aided algorithm is $160 \cdot \alpha(4\rho)$-competitive.
\end{repreduction}

\SetKwIF{With}{OtherwiseWith}{Otherwise}{with}{do:}{otherwise with}{otherwise}{}
\begin{algorithm}[ht]
\caption{\textsf{Algorithm for Reduction~\ref{re:to_aided_algorithms}}($ALG$)} \label{alg:AllowAided}
Let $S$ be a set containing every element $e \in N$ with probability $1/2$.\\
Let $W \leftarrow \max_{e \in S} w(e)$ and $\tilde{\rho} \leftarrow 4 \cdot \rank(S)$.\\
\With{probability $1/2$}
{
	Pick the first element of $N \setminus S$ whose value is at least $W$.\\
}
\Else
{
	Run $ALG$ with $\tilde{k}$ and $W$ on the elements of $N \setminus S$, ignoring elements of weight $W / (8\tilde{\rho})$ or less.
}
\end{algorithm}

Reduction~\ref{re:to_aided_algorithms} is implemented by Algorithm~\ref{alg:AllowAided}. Observe that Algorithm~\ref{alg:AllowAided} can be implemented as an {\sbmsp} algorithm if $ALG$ can (if $ALG$ declares a sampling probability $p_s$, then Algorithm~\ref{alg:AllowAided} declares a sampling probability of $(1 + p_s)/2$). To prove Reduction~\ref{re:to_aided_algorithms}, we show that Algorithm~\ref{alg:AllowAided} is $160 \cdot \alpha(4k)$-competitive whenever $ALG$ is an $\alpha(\tilde{k})$-competitive aided algorithm. The proof follows immediately from the following claims. The first claim analyzes the case where there exists a single element which has, alone, much of the weight of $\OPT$.

\begin{claim} \label{cl:classical_result}
If there exists an element $e \in N$ such that $w(e) \geq w(\OPT) / 20$, then Algorithm~\ref{alg:AllowAided} is at least $160$-competitive.
\end{claim}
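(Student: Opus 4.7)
The idea is that the ``probability $1/2$'' branch in Algorithm~\ref{alg:AllowAided} is essentially the classical secretary algorithm with threshold $W=\max_{e\in S}w(e)$, and classical arguments show it picks the overall maximum-weight element with constant probability. The hypothesis of the claim then converts ``maximum weight element'' into a constant fraction of $w(\OPT)$.

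\textbf{Key steps.} First I would let $e_1$ and $e_2$ denote the two elements of $N$ with the largest and second-largest weights, respectively. Since the element $e$ in the hypothesis satisfies $w(e_1)\ge w(e)\ge w(\OPT)/20$, it suffices to produce $e_1$ with constant probability. (If $|N|=1$ we have $S=\varnothing$, so $W$ is effectively $-\infty$ and the branch trivially picks $e_1$ with probability $1$ conditional on $e_1\notin S$, handling that degenerate case.) Next I would consider the event
\[
E \;=\; \{e_1 \notin S\} \cap \{e_2 \in S\}.
\]
Because each element is placed in $S$ independently with probability $1/2$, we have $\Pr[E]=1/4$.

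The heart of the argument is a short deterministic observation: conditioned on $E$, we claim the threshold branch selects precisely $e_1$. Indeed, under $E$ we have $e_2\in S$ and no element of $N$ has weight strictly between $w(e_2)$ and $w(e_1)$, so $W=\max_{e\in S}w(e)=w(e_2)$. On the other hand, the elements of $N\setminus S$ with weight at least $W=w(e_2)$ are exactly those with weight greater than $w(e_2)$ (since $e_2$ itself is in $S$), and by definition $e_1$ is the unique such element. Hence the algorithm picks $e_1$, whose weight is at least $w(\OPT)/20$.

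\textbf{Putting it together.} The branch is chosen with probability $1/2$, so the expected value of the algorithm's output is at least
\[
\frac{1}{2}\cdot \Pr[E]\cdot w(e_1) \;\ge\; \frac{1}{2}\cdot \frac{1}{4}\cdot \frac{w(\OPT)}{20} \;=\; \frac{w(\OPT)}{160},
\]
giving a competitive ratio of at most $160$, as required. The only subtle point is the deterministic identification of the picked element under $E$, which relies on $e_1,e_2$ being the two \emph{unique} top weights; tie-breaking among equal weights (as already assumed throughout the paper, cf.\ footnote in the introduction) makes this rigorous with no extra effort.
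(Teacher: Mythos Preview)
Your proof is correct and follows essentially the same approach as the paper: both define $e_1,e_2$ as the top two elements, condition on $e_1\notin S$, $e_2\in S$, and the threshold branch being chosen (total probability $1/8$), and observe that in this event the algorithm necessarily selects $e_1$, yielding expected value at least $w(\OPT)/160$. Your version is slightly more explicit about why $e_1$ is the unique element of $N\setminus S$ meeting the threshold and about the degenerate $|N|=1$ case, but the argument is the same.
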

\begin{proof}
Let $e_1$ and $e_2$ be the elements with the highest and second highest weights, respectively. Consider the event when $e_1 \not \in S$, $e_2 \in S$ and Algorithm~\ref{alg:AllowAided} decides not to execute $ALG$. Clearly, this event happens with probability $1/8$. When it happens, Algorithm~\ref{alg:AllowAided} is guaranteed to pick $e_1$, and get a value of $w(\OPT) / 20$.
\end{proof}

The two next claims analyze the case where no single element is very valuable. Let $V \subseteq N$
be the set of elements of weight strictly more than $W/(8\tilde{\rho})$, and let $G$ be the event that all
the following happens:
\begin{center}
\begin{tabular}{rlrl}
	(i)& Algorithm~\ref{alg:AllowAided} executes $ALG$. &\qquad
	(iii)& $|S \cap \OPT| \geq \rho / 4$. \\[1mm]
	(ii)& $w((\OPT \cap V) \setminus S) \geq w(\OPT)/8$. &
\qquad	(iv)& The heaviest element of $N$ is in $S$.
\end{tabular}
\end{center}
%\begin{enumerate}[(i)]\itemsep-1mm
%	\item Algorithm~\ref{alg:AllowAided} executes $ALG$.
%	\item $|S \cap \OPT| \geq \rho / 4$.
%	\item $w((\OPT \cap V) \setminus S) \geq w(\OPT)/8$.
%	\item The heaviest element of $N$ is in $S$.
%\end{enumerate}

\begin{claim}
Conditioned on $G$ happening, Algorithm~\ref{alg:AllowAided} outputs a solution of expected value at least $(8 \cdot \alpha(4{\rho}))^{-1} \cdot w(\OPT)$.
\end{claim}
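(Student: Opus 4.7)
The plan is to show that, conditioned on $G$, the sub-instance handed to $ALG$ is a legitimate aided {\sbmsp} instance with parameters $W$ and $\tilde{\rho}$ whose offline optimum is at least $w(\OPT)/8$, and that $\tilde{\rho} \leq 4\rho$, so that the $\alpha$-competitiveness of $ALG$ together with monotonicity of $\alpha$ gives the claim.

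First I would unpack the four conditions in $G$. From (iv) we have $W = \max_{e \in S} w(e) = \max_{e \in N} w(e)$, so every element of $N$ has weight at most $W$. From (iii), $\tilde{\rho} = 4\rank(S) \geq 4|S \cap \OPT| \geq \rho$, and $\tilde{\rho} = 4\rank(S) \leq 4\rho$ holds deterministically. Hence $\rho \leq \tilde{\rho} \leq 4\rho$, which both shows that $\tilde{\rho}$ is a valid rank upper bound for any restriction of $M$ and, by monotonicity, yields $\alpha(\tilde{\rho}) \leq \alpha(4\rho)$. The elements given to $ALG$ are those of $(N\setminus S) \cap V$, whose weights lie in $(W/(8\tilde{\rho}), W]$ by the definition of $V$, and the restricted matroid $M|_{V \setminus S}$ has rank at most $\rho \leq \tilde{\rho}$; so $ALG$ is indeed running on a valid aided {\sbmsp} instance. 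Moreover $(\OPT \cap V) \setminus S$ is independent in $M|_{V\setminus S}$ (as a subset of the independent set $\OPT$), so the offline optimum of the sub-instance is at least $w((\OPT \cap V) \setminus S) \geq w(\OPT)/8$ by (ii). Applying the $\alpha(\tilde{\rho})$-competitiveness of $ALG$ for each fixed $S$ consistent with $G$, and then using $\alpha(\tilde{\rho}) \leq \alpha(4\rho)$, gives that the expected value of the returned set (over $ALG$'s internal randomness, for this fixed $S$) is at least $w(\OPT)/(8\alpha(4\rho))$; averaging over $S$ conditional on $G$ preserves the bound.

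The main subtlety I anticipate is to verify that when $ALG$ is invoked as a subroutine within Algorithm~\ref{alg:AllowAided}, its internal sample is distributed the way its competitive guarantee requires, even though the outer sample $S$ is itself conditioned on the non-trivial event $G$. This is resolved by the standard two-stage sampling protocol in which Algorithm~\ref{alg:AllowAided} declares outer sampling probability $(1+p_s)/2$ and routes the outer sample into $S$ together with an independent sample for $ALG$; once this bookkeeping is in place, the competitive guarantee of $ALG$ applies pointwise on every realization of $S$ satisfying $G$, and the argument above goes through.
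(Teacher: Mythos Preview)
Your proof is correct and follows essentially the same route as the paper: fix a set $S$ for which $G$ holds, verify from (iii) and (iv) that $W$ and $\tilde{\rho}$ are valid aided parameters for the restricted instance $M|_{V\setminus S}$, lower bound its optimum by $w(\OPT)/8$ via (ii), apply the $\alpha(\tilde{\rho})$-guarantee of $ALG$, and finish with $\tilde{\rho}\leq 4\rho$ and monotonicity of $\alpha$. Your extra paragraph on the nested sampling (why $ALG$'s own sample is still correctly distributed after conditioning on $G$) is a point the paper does not spell out in this proof, having only noted the $(1+p_s)/2$ declaration beforehand; it is a welcome clarification but not a different argument.
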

\begin{proof}
Fix an arbitrary set $S$ for which $G$ happens. We show that Algorithm~\ref{alg:AllowAided} outputs a solution of expected value at least $(8 \cdot \alpha(\tilde{\rho}))^{-1} \cdot w(\OPT)$ conditioned on any such set $S$.

Notice that $ALG$ observes the instance of {\sbmsp} corresponding to the matroid $M|_{V \setminus
  S}$, i.e., the matroid $M$ restricted to those elements not appearing in $S$ that have weight at least
$W/(8 \tilde \rho)$. Let us
verify that the values $W$ and $\tilde{\rho}$ supplied to $ALG$ are appropriate for this
instance. Since $W$ is the weight of the heaviest element in $N$ (by (iv)), the weight of every element $e \in
V \setminus S$ is within the range $(W / (8\tilde{\rho}), W]$. On the other hand, using that (iii) holds
\[
	\tilde{\rho}
	=
	4 \cdot \rank(S)
	\geq
	4 \cdot \rank(S \cap \OPT)
	=
	4 \cdot |S \cap \OPT|
	\geq
	\rho
	.
\]

Since $ALG$ is $\alpha(\tilde{\rho})$-competitive when supplied with appropriate $\tilde{\rho}$ and
$W$ values, it is guaranteed to pick, in expectation, a solution of value at least $w((\OPT \cap V)
\setminus S) / \alpha(\tilde{\rho}) \geq (8 \cdot \alpha(\tilde{\rho}))^{-1} \cdot w(\OPT)$, where we
used that (ii) holds for the inequality. The claim now follows since:
\[
	\tilde{\rho}
	=
	4 \cdot \rank(S)
	\leq
	4\rho
	.
	\qedhere
\]
\end{proof}

The above claim shows that whenever $G$ happens, Algorithm~\ref{alg:AllowAided} performs well. We
finish the proof of the reduction by lower bounding the probability of $G$.
\begin{claim}
If $w(e) \leq w(\OPT) / 20$ for every element $e \in N$, then $G$ happens with probability at least $1/20$.
\end{claim}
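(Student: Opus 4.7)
The plan is to introduce three auxiliary events depending only on the sample $S$ whose conjunction implies conditions (ii), (iii), and (iv), then union-bound their failure probabilities and multiply by $1/2$ for the independent coin flip governing (i). A preliminary observation is that the hypothesis $w(e) \leq w(\OPT)/20$ for every $e \in N$ forces $\rho \geq 20$: summing over $\OPT$ yields $w(\OPT) \leq \rho \cdot w(\OPT)/20$. This is crucial to get a useful concentration bound on $|S \cap \OPT|$.

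I would define $A$ to be the event that the heaviest element of $N$ lies in $S$, define $B$ to be $\{|S \cap \OPT| \geq \rho/4\}$, and define $D$ to be $\{w(\OPT \setminus S) \geq w(\OPT)/4\}$. Event $A$ is literally (iv); event $B$ implies (iii) since $\rank(S) \geq \rank(S \cap \OPT) = |S \cap \OPT|$; and, as shown next, $A \cap B \cap D$ implies (ii). Under $A$ one has $W = \max_{e \in N} w(e)$, and under $B$ one has $\tilde{\rho} = 4\rank(S) \geq \rho$, so every element of $\OPT \setminus V$ has weight at most $W/(8\tilde{\rho}) \leq W/(8\rho)$, yielding $w(\OPT \setminus V) \leq W/8 \leq w(\OPT)/160$. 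Combined with $D$ one gets
\[
w((\OPT \cap V) \setminus S) \;\geq\; w(\OPT \setminus S) - w(\OPT \setminus V) \;\geq\; \tfrac{w(\OPT)}{4} - \tfrac{w(\OPT)}{160} \;\geq\; \tfrac{w(\OPT)}{8}.
\]

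To bound the failure probabilities I would apply Chebyshev's inequality twice. Clearly $\Pr[\bar A] = 1/2$. Since $|S \cap \OPT|$ is binomially distributed with parameters $(\rho, 1/2)$, its variance is $\rho/4$, and Chebyshev gives $\Pr[\bar B] \leq (\rho/4)/(\rho/4)^2 = 4/\rho \leq 1/5$ by $\rho \geq 20$. The random variable $w(\OPT \setminus S)$ has mean $w(\OPT)/2$ and variance $\tfrac14 \sum_{e \in \OPT} w(e)^2 \leq \tfrac14 (\max_e w(e)) \cdot w(\OPT) \leq w(\OPT)^2/80$, so Chebyshev gives $\Pr[\bar D] \leq (w(\OPT)^2/80)/(w(\OPT)/4)^2 = 1/5$. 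A union bound then yields $\Pr[A \cap B \cap D] \geq 1 - 1/2 - 1/5 - 1/5 = 1/10$, and since event (i) is an independent fair coin flip, one obtains $\Pr[G] \geq 1/20$.

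The only delicate point is the verification of (ii): because $V$ is defined through $W$ and $\tilde{\rho}$, which are themselves random functions of $S$, one must be careful to use $A$ and $B$ to pin down $W$ to the global maximum and lower-bound $\tilde{\rho} \geq \rho$ before bounding $w(\OPT \setminus V)$ uniformly. Once that bookkeeping is handled, everything else reduces to standard second-moment estimates.
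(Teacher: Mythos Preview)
Your proof is correct and follows essentially the same strategy as the paper: bound the failure probabilities of the three $S$-dependent conditions by a union bound to get probability at least $1/10$, then multiply by the independent $1/2$ coin flip for (i). The only cosmetic differences are that the paper uses a Chernoff bound (rather than Chebyshev) for condition (iii) and sets the threshold for $w(\OPT\setminus S)$ at $w(\OPT)/7$ (rather than $w(\OPT)/4$); your choices make the arithmetic slightly cleaner and work just as well.
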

\begin{proof}
Let $G'$ be the event that (iii) and (iv) hold and in addition $w(\OPT \setminus S) \geq w(\OPT)/7$.
%\begin{compactitem}
%	\item $|S \cap \OPT| \geq \rho / 4$.
%	\item $w(\OPT \setminus S) \geq w(\OPT)/4$.
%	\item The heaviest element of $N$ is in $S$.
%\end{compactitem}
We continue by first lower bounding $\Pr[G\mid G']$ by $1/2$ and then $\Pr[G']$  by $1/10$ which in turn implies the
claim as $\Pr[G] = \Pr[G\mid G'] \Pr[G']$.

To bound $\Pr[G \mid G']$ notice that it equals the probability that Algorithm~\ref{alg:AllowAided}
executes $ALG$ and the probability that $w((\OPT \cap V) \setminus S) \geq w(\OPT)/8$ conditioned on $G'$. Clearly the
probability of $ALG$ being executed is $1/2$ and independent of the event $G'$. We shall now prove
that $G'$ in fact implies $w((\OPT \cap V) \setminus S) \geq w(\OPT)/8$, and therefore, $\Pr[G \mid G']
= 1/2$.
%Notice that $G'$ is independent of the question whether Algorithm~\ref{alg:AllowAided} executes
%$ALG$. Let us prove that $G'$ implies $G$, given that Algorithm~\ref{alg:AllowAided} executes
%$ALG$. It is enough to prove that $w((\OPT \cap V) \setminus S) \geq w(\OPT)/8$. 
Observe that by (iii):
\[
	\tilde{\rho}
	=
	4 \cdot \rank(S)
	\geq
	4 \cdot \rank(S \cap \OPT)
	=
	4 \cdot |S \cap \OPT|
	\geq
	\rho
	,
\]
and thus,
\begin{align*}
	w((\OPT \cap V) \setminus S)
	\geq{} &
	w(\OPT \setminus S) - w(\OPT \setminus V)
	\geq
	\frac{w(\OPT)}{7} - \rho \cdot \frac{W}{8\tilde{\rho}}\\
	\geq{} &
        \frac{w(\OPT)}{7} - \rho \cdot \frac{w(\OPT)}{160\rho}
	\geq
	\frac{w(\OPT)}{8}
	.
\end{align*}
The second inequality follows from the fact that $w(\OPT \setminus V)$ contains at most $\rho$ elements, each having a weight of at most $W/(8\tilde{\rho})$; the third follows from the inequality $\tilde{\rho} \geq \rho$
and the assumption of the claim, \ie  that any element has weight
at most $w(\OPT)/20$ and therefore $W\leq w(\OPT)/20$.

Having proved $\Pr[G \mid G'] = 1/2$, we continue by  lower bounding $\Pr[G']$. We shall do so
by upper bounding the probability that each of its conditions is violated, and then applying the
union bound.  
%Recall that the conditions of $G'$ are (iv)  the heaviest element of $N$ is in  $S$,
%iii) $S\cap \OPT| \geq \rho/4$ and $w(\OPT \setminus S) \geq w(\OPT)/4$. 
\begin{description}
\item[Condition (iv):] It is clear that the   the heaviest element of $N$ is \emph{not} in  $S$ with probability $1/2$ because each element is in $S$
  with probability $1/2$.
\item[Condition (iii):]  
By the conditions of the claim, $\OPT$ must contain at least $20$ elements. Since each element
appears in $S$ with probability $1/2$, we get by the Chernoff bound\footnote{To be precise, we use
  that $\Pr[X \leq (1-\delta) \mu) \leq e^{-\frac{\delta^2 \mu}{2}}, 0 < \delta <1$. In our setting
  $X: =|S\cap \OPT|,\mu:= |\OPT|/2 = \rho/2$ and  $\delta:= 1/2$.} that
\[
	\Pr[\neg (iii)] = \Pr[|S \cap \OPT| < \rho /4]
%	=
%	\Pr[|\OPT \setminus S| < \rho/4]
	\leq
	e^{-|\OPT|/16}
	\leq
	e^{-20/16}
	\leq
	0.287
	.
\]

\item[Condition $\boldsymbol{w(\mathrm{OPT} \setminus S) \geq w(\mathrm{OPT})/7}$:] Let us upper bound the probability
  that this condition does not hold, i.e.,  $\Pr[w(\OPT \setminus S) < w(\OPT)/7]$. Notice that
\[
	\mathbb{E}[w(\OPT \setminus S)]
	=
	\frac{w(\OPT)}{2}
	,
\]
and because $S$ contains each element with probability $1/2$, independently,
\begin{align*}
	\var[w(\OPT \setminus S)]
	={} &
	\sum_{e \in \OPT} \var[w(\{e\} \setminus S)]
	=
	\sum_{e \in \OPT} \frac{[w(e)]^2}{4},
\end{align*}
which in turn, as $w(e) \leq w(\OPT)/20$ for each element $e\in N$, is upper bounded by
\begin{align*}
	\frac{w(\OPT)}{20} \cdot \sum_{e \in \OPT} \frac{w(e)}{4} =
	\frac{w(\OPT)}{20} \cdot \frac{w(\OPT)}{4}
        =
	\frac{[w(\OPT)]^2}{80}
	.
\end{align*}

Combining these observations, we get by Chebyshev's inequality:
\begin{align*}
	&
	\Pr\left[w(\OPT \setminus S) < \frac{w(\OPT)}{7} \right]
	\leq
	\Pr\left[|w(\OPT \setminus S) - \mathbb{E}[w(\OPT \setminus S)]| > \left(\frac{1}{2} -
            \frac{1}{7}  \right)w(\OPT)\right]\\
	\leq{} &
	\Pr\left[|w(\OPT \setminus S) - \mathbb{E}[w(\OPT \setminus S)]| > \sqrt{10} \cdot \sqrt{\var[w(\OPT \setminus S)]}\right]
	\leq
	\frac{1}{10}
	.
\end{align*}

\end{description}

To summarize, we have by the union bound, $\Pr[G'] \geq 1 - (1/2 + 0.287 + 1/10) > 1/10$.
\end{proof}

%%% Local Variables:
%%% mode: latex
%%% TeX-master: "main"
%%% End: 

\section{Missing Proofs of Section~\ref{sec:algo}}\label{sec:missingProofs}

\begin{proof}[Proof of Lemma~\ref{lem:composition}]
  For every $j\in H$, let $F_j = \bigcup_{i\in H, i\geq j} I_i$.  We show by induction that $F_j \in
  \mathcal{I}$ for all $j\in H$.  The result then follows by choosing $j$ to be the smallest index
  in $H$.  Clearly $F_j\in \cI$ when $j$ is the largest index in $H$, since in this case
  $F_j = I_j \in \cI_j \subseteq \cI$ (the last inclusion holds since $M_j$ is obtained from $M$ by restrictions and contractions).
	%, any independent set of $M_j$ is independent in $M$, i.e., $F_j\in \cI_j\subseteq \mathcal{I}$.

Now assume that $j\in H$ is not the largest
index in $H$. Then, 
$F_j= F_{j + 2} \cup I_{j}$ since $H$ contains either odd
or even indices. By the induction
hypothesis we obtain $F_{j + 2} \in \mathcal{I}$.
Also, since
$I_i\subseteq N_i\subseteq \spn(S\cap B_{i-1})$,
we have
$F_{j + 2} \subseteq \spn(S\cap B_{\geq j+1})$.
The fact that $I_j\in \cI_j$, implies
\begin{align*}
|I_j| &= r_j(I_j) {=}
r(I_j \cup (S\cap B_{\geq j+1}))
- r(S\cap B_{\geq i+1})  &\mbox{(by the definition of $r_j$~\eqref{eq:ri})} \\
&= r(I_j \cup \spn(S\cap B_{\geq j+1}))
   - r(\spn(S\cap B_{\geq i+1}))\\
&\leq r(I_j \cup F_{j + 2}) - r(F_{j + 2}),
\end{align*}
where the equality on the second line follows since $r(A+e)=r(A)$ for any $e\in \spn(A)$,
and the inequality follows from the submodularity (diminishing returns) of $r$
and our previous observation that $F_{j + 2} \subseteq \spn(S\cap B_{\geq j+1})$.
It remains to observe that $r(F_{j + 2})=|F_{j + 2}|$ since
$F_{j + 2} \in \mathcal{I}$, and hence,
the above inequality implies $r(I_j\cup F_{j + 2})\geq |I_j|+|F_{j + 2}|$.
Thus, $F_j=F_{j + 2}\cup I_j\in \mathcal{I}$.
\end{proof}

\begin{proof}[Proof of Lemma~\ref{lem:metaSelectMany}]
Conditioned on $i\in H$, Algorithm~\ref{alg:BucketsSecretary}
selects elements of $N_i \setminus S$
greedily, and thus, the number of
elements selected from $B_i$ is $r(N_i\setminus S)$.
The key observation for lower bounding $r(N_i\setminus S)$ is that $N_i\cap \OPT$ is independent
in $M_i$, which we prove first. Since \OPT is a maximum
weight independent set, we have
\begin{equation}\label{eq:Bip1Span}
B_{\geq i+1} \subseteq \spn(\OPT \cap B_{\geq i+1}).
\end{equation}
This easily follows by recalling that the greedy algorithm produces
\OPT, and thus,
$r(\OPT\cap B_{\geq i+1})=r(B_{\geq i+1})$.
Hence,
\begin{align*}
r_i(N_i\cap \OPT) &=
  r((N_i\cap \OPT) \cup (S\cap B_{\geq i+1}))
    - r(S\cap B_{\geq i+1}) & \mbox{(by~\eqref{eq:ri})}\\
 &\geq
   r((N_i\cap \OPT)\cup (\OPT \cap B_{\geq i+1}))
     - r(\OPT\cap B_{\geq i+1}) & \mspace{-1mu} \mbox{(by~\eqref{eq:Bip1Span} and submodularity)}\\
 &= |(N_i\cap \OPT)\cup (\OPT\cap B_{\geq i+1})|
 - |\OPT\cap B_{\geq i+1}|\\
 &= |N_i\cap \OPT|,
\end{align*}
where the penultimate equality follows by observing
that $(N_i\cap \OPT)\cup (\OPT \cap B_{\geq i+1})$ is
independent in $M$ since it is a subset of \OPT.
Hence, we showed that $N_i\cap \OPT$ is independent
in $M_i$. Therefore,
$r(N_i\setminus S) \geq |N_i\cap \OPT\cap \bar{S}|$, where
$\bar{S}=N\setminus S$, and
we obtain
\begin{align*}
\E[|T\cap B_i|] &= \Pr[i\in H]\cdot \mathbb{E}[r(N_i\setminus S)]
    \geq \frac{1}{2}\cdot
    \E[|N_i \cap \OPT \cap \bar{S}|]\\
& =
\frac{1}{2} \cdot \sum_{e\in B_i\cap \OPT} \left[
\underbrace{\Pr[e\in \bar{S}]}_{=1/2}
\cdot \Pr[e\in N_i\mid e\not\in S]
\right].
\end{align*}
It remains to observe that for any element $e\in B_i$,
\begin{equation*}
\Pr[e\in N_i \mid e\not\in S] = p_{e,f(B_{i-1})}.
\end{equation*}
Indeed, if $i=1$ this follows from $N_1=B_1$
and $p_{e,0}=1$ (we recall that $f(B_0)=0$ by convention).
Otherwise, for $i>1$, we have
$N_i=B_i\cap \spn(S\cap B_{\geq i-1})$, and thus,
for any $e\in B_i$,
\begin{equation*}
\Pr[e\in N_i \mid e\not\in S]
  = \Pr[e\in \spn(S\cap B_{\geq i-1}) \mid e\not\in S]
  = p_{e,f(B_{i-1})}.
	\qedhere
\end{equation*}

\end{proof}

\end{document}